\numberwithin{equation}{section}
\numberwithin{figure}{section}
\theoremstyle{plain}
	\newtheorem{theorem}{Theorem}[section]
	\newtheorem{lem}[theorem]{Lemma}
	\newtheorem{lemma}[theorem]{Lemma}
\theoremstyle{definition}
	\newtheorem{defn}[theorem]{Definition}
	\newtheorem*{remark*}{Remark}
\newif\ifusebb
  \DeclareSymbolFont{bbold}{U}{bbold}{m}{n}
  \DeclareSymbolFontAlphabet{\mathbbold}{bbold}
\renewcommand{\epsilon}{\varepsilon}
\setlist[description,1]{align=left,leftmargin=0.1in} 
\setlist[itemize,1]{leftmargin=*}
\newenvironment{fminipage}%
  {\begin{Sbox}\begin{minipage}}%
  {\end{minipage}\end{Sbox}\fbox{\TheSbox}}
\def\pvec#1#2{\vec{\mbox{P}}^{#1}\left[ #2 \right]}
\def\otil{\widetilde{O}}
\renewcommand\AA{\mvar{A}}
\newcommand\DD{\mvar{D}}
\newcommand\LL{\mvar{L}}
\global\long\def\R{\mathbb{{R}}}
\global\long\def\mvar#1{\boldsymbol{\mathit{#1}}}
\global\long\def\vvar#1{\boldsymbol{\mathit{#1}}}
\newcommand{\vb}{\vvar{b}}
\newcommand{\vc}{\vvar{c}}
\newcommand{\vd}{\vvar{d}}
\newcommand{\ve}{\vvar{e}}
\newcommand{\vp}{\vvar{p}}
\newcommand{\vr}{\vvar{r}}
\newcommand{\vs}{\vvar{s}}
\newcommand{\vx}{\vvar{x}}
\newcommand{\vy}{\vvar{y}}
\newcommand{\vphi}{\vvar{\phi}}
\newcommand{\vPhi}{\vvar{\Phi}}
\newcommand{\valpha}{\vvar{\alpha}}
\renewcommand{\pvec}[1]{\vvar{#1}\mkern2mu\vphantom{#1}}
\newcommand{\vrp}{\pvec{r}'}
\newcommand{\vcp}{\pvec{c}'}
\newcommand{\vones}{\ensuremath{\mathbf{1}}}
\newcommand{\vzero}{\ensuremath{\mathbf{0}}}
\newcommand{\vindic}[1]{\ve_{i}}
\newcommand{\mI}{\mvar I}
\newcommand{\eps}{\epsilon}
\newcommand{\mdiag}{\mathbf{D}}
\newcommand\REFlinewhile{5}
\newcommand\REFlineretone{11}
\newcommand\REFlinerettwo{15}
\newcommand\REFlineretthree{20}
\newcommand\REFlinelinfaverageornot{7}
\newcommand\REFlinelinfpertrule{13}
\newcommand\REFlineloneretone{11}
\newcommand\REFlinelonerettwo{15}
\newcommand\REFlineloneretthree{20}
\newcommand\REFlineloneaverageornot{7}
\newcommand\REFlinelonepertrule{13}
\begin{document}

\title{Improved Convergence for $\ell_\infty$ and $\ell_1$ Regression via Iteratively Reweighted Least Squares}

\author{
Alina Ene
\thanks{Department of Computer Science, Boston University, \texttt{aene@bu.edu}.} 
\and
Adrian Vladu 
\thanks{Department of Computer Science, Boston University, \texttt{avladu@mit.edu}.}
}

\date{}
\maketitle
\thispagestyle{empty}
\setcounter{page}{0}

\begin{abstract}
	\normalsize
The iteratively reweighted least squares method (IRLS) is a popular technique used in practice for solving regression problems. Various versions of this method have been proposed, but their theoretical analyses failed to capture the good practical performance. 

In this paper we propose a simple and natural version of IRLS for solving $\ell_\infty$ and $\ell_1$ regression, which provably converges to a $(1+\epsilon)$-approximate solution in $O(m^{1/3}\log(1/\epsilon)/\epsilon^{2/3} + \log m/\epsilon^2)$ iterations, where $m$ is the number of rows of the input matrix. Interestingly, this running time is independent of the conditioning of the input, and the dominant term of the running time depends sublinearly in $\epsilon^{-1}$, which is atypical for the optimization of non-smooth functions.

This improves upon the more complex algorithms of Chin et al. (ITCS '12), and Christiano et al. (STOC '11) by a factor of at least $1/\epsilon^2$, and yields a truly efficient natural algorithm for the slime mold dynamics (Straszak-Vishnoi, SODA '16, ITCS '16, ITCS '17).
\end{abstract}


\section{Introduction}

Regression problems are fundamental primitives in scientific computing. Among these, $\ell_\infty$- and $\ell_1$-regression are their hardest instantiations, since through standard reductions they can be shown to be equivalent to linear programming. 

While the series of works on these topics is truly extensive and diverse, 
the simpler methods have pervaded into the realm of practical applications. Among these, an extremely popular scheme known for its simplicity is the iteratively re-weighted least squares (IRLS) method. The idea behind it is to reduce optimization problems to iteratively solving a series of weighted $\ell_2$-minimization problems, where the weights are adaptively chosen in such a way that the resulting solutions from the sequence of least-squares problems converge to the sought optimal point. In particular, due to its relevance in signal processing, $\ell_1$ regression is a very important application of IRLS~\cite{candes2006compressive,chartrand2008iterative}.

Despite the fact that various versions of this method have been studied ever since the 60's~\cite{lawson1961contributions,osborne1985finite} theoretical understanding  of their convergence has lacked. Recent works have attempted to fill this gap, and offer provable guarantees~\cite{daubechies2010iteratively,StraszakV16,StraszakV16Soda,StraszakV16arxiv}, some of them inspired from the interpretation of this method as a dynamical system. In particular, we note the \textit{Physarum} dynamics, which have been studied in a completely different context~\cite{ito2011convergence,johannson2012slime,tero2007mathematical,bonifaci2012physarum,becchetti2013physarum} in order to justify an experiment which revealed that a unicellular organism, the slime mold, could solve the shortest path problem in a maze~\cite{nakagaki2000intelligence}. The fact that these dynamics are essentially just a version of the IRLS method was observed in~\cite{StraszakV16}.

Returning to the more classical world of algorithm design and analysis, it is worth observing that existing analyses of IRLS methods fall into one of the following two categories: (i) they show convergence only when the problem is properly initialized, or (ii) the guaranteed running time is prohibitive in the sense that it is highly dependent on how the input is conditioned, or it has a high polynomial dependency on the desired solution accuracy.

In this paper, we focus on analyzing simple versions of IRLS which overcome both aforementioned obstacles. In particular, our methods always converge to $1+\epsilon$ multiplicative approximation for the objectives $\min_{\vx:\AA\vx = \vb}\|\vx\|_p$, $p\in\{1,\infty\}$, in $\otil(m^{1/3}/\epsilon^{2/3} + 1/\epsilon^2)$ iterations\footnote{We use $\otil$ notation to suppress polylogarithmic factors in $m/\epsilon$.}  of solving a weighted least squares problem, where $m$ is the dimension of the sought vector $\vx$.

Inspiration for our methods is heavily drawn from the work of~\cite{ChristianoKMST11}, which offered a ground-breaking result by showing that in undirected graphs, a $(1+\epsilon)$-approximate maximum flow can be found in $\otil(m^{1/3}/\epsilon^{11/3})$ iterations (subsequently the $\epsilon$ dependence was improved to $1/\epsilon^{8/3}$, see~\cite{ChinMMP13}) of solving a weighted least squares problem -- which in conjunction with efficient Laplacian system solvers, broke a longstanding barrier for fast graph algorithms. While these algorithms generalize to arbitrary $\ell_1$ and $\ell_\infty$ regression problems, they are  somewhat involved, in particular due to the fact that they are the product of combining the multiplicative weights update method with a regularization technique, and a second potential function.\footnote{To be more specific, Christiano et al. solve the approximate maximum flow problem, which is a specific instance of $\ell_\infty$ regression. Chin et al. build on this work to solve $\ell_1$ regression with block structure; the block structure is relevant for their specific applications, but is a direct extension of the method, so solving vanilla $\ell_1$ regression is still the main problem tackled there.}

Instead, our method attempts to directly solve the non-smooth objective while tracking a single potential function. The number of iterations looks surprising, since the dominant term is $\otil(m^{1/3}/\epsilon^{2/3})$, whenever $\epsilon \geq m^{-1/4}$, while classical techniques for optimizing non-smooth functions require  a number of iterations that depends on the product between the function's parameters (such as Lipschitz constant of the gradient or radius of the domain), and $1/\epsilon$ in the best case, when accelerated methods are used; see~\cite{nesterov2005smooth} for more details.

\footnote{We emphasize that using off-the-shelf methods, without further assumptions on the input, the number of iterations of any standard optimization method would be $\Omega(\sqrt{m})$ even for the very special instances where the affine constraint corresponds to a flow satisfying a given demand in unweighted graphs, and in general will depend on how the input matrix is conditioned, since this conditioning determines the magnitude of the subgradients or the diameter of the domain we are optimizing over. The breakthrough of Christiano et al. was the first work that managed to reduce this dependence for maximum flow, which is a specific instance of the $\ell_\infty$ regression problem.
}

Interestingly, a line of work that yielded results very similar in spirit to ours is that of approximately solving positive linear and semidefinite programs~\cite{young2001sequential,allen2015using,allen2016using}, where the goal was to produce a first order optimization method that can be implemented in a number of iterations independent of the conditioning of the input. Improving the $\epsilon$ dependence to $o(1/\epsilon^2)$ is an important open problem in this subfield.

We believe that understanding the connection between these results can pave the way for designing new efficient optimization primitives. 

\subsection{Main Theorem}
We state the main theorem of this paper. It follows from the correctness proofs described in Sections~\ref{subsec:linf_min} and~\ref{subsec:lone_min}, and the convergence proofs from Lemmas~\ref{lem:linf_conv} and~\ref{lem:lone_conv_fast}.

\begin{theorem}\label{thm:main_theorem}
There exist algorithms \textsc{$\ell_\infty$-Minimization} and \textsc{$\ell_1$-Minimization} such that, on input $(\AA, \vb, \epsilon, M)$, where
$\AA \in \R^{n \times m}$ is a matrix, $\vb \in \R^n$ is a vector which lies in the span of $\AA$'s columns, $\epsilon$ is an accuracy parameter, and $M$ is a target value:
\begin{enumerate}
\item \textsc{$\ell_\infty$-Minimization} returns a solution $\vx$ such that $\AA \vx = \vb$, and $\|\vx\|_\infty \leq (1+\epsilon) M$, or certifies that  $\min_{\vx: \AA \vx = \vb} \|\vx\|_\infty \geq (1-\epsilon) M$.
\item \textsc{$\ell_1$-Minimization} returns a solution $\vx$ such that $\AA \vx = \vb$, and $\|\vx\|_1 \leq (1+\epsilon) M$, or certifies that $\min_{\vx: \AA \vx = \vb} \|\vx\|_1 \geq (1-\epsilon)M$.
\end{enumerate}
Furthermore both algorithms finish in
\[
O\left(\frac{m^{1/3} \log(1/\epsilon)}{\epsilon^{2/3}} + \frac{\log m}{\epsilon^2}\right)
\]
iterations, each of which can be implemented in the time required to solve
a linear system of the form $\AA \DD \AA^\top \vphi = \vb$, where $\DD \in \R^{m\times m}$ is an arbitrary nonnegative diagonal matrix.
\end{theorem}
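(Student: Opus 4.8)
The plan is to realize both \textsc{$\ell_\infty$-Minimization} and \textsc{$\ell_1$-Minimization} as multiplicative-weights-driven IRLS schemes and to analyze each through a single potential function. Consider the $\ell_\infty$ case first; after rescaling, assume the target value is $M=1$. The algorithm maintains a nonnegative weight vector $\vw^{(t)}$ on the $m$ coordinates and, at step $t$, solves the regularized weighted least-squares problem
\[
\vx^{(t)} \;=\; \argmin_{\vx:\ \AA\vx=\vb}\ \sum_{i}\Bigl(w_i^{(t)} + \tfrac{\rho}{m}\,\|\vw^{(t)}\|_1\Bigr)\,x_i^2,
\]
where $\rho$ is a width/regularization parameter to be fixed at the end. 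Writing $\DD^{(t)}$ for the diagonal matrix of reciprocals of these coefficients, the minimizer is $\vx^{(t)} = \DD^{(t)}\AA^\top\vphi^{(t)}$ with $\AA\DD^{(t)}\AA^\top\vphi^{(t)}=\vb$, i.e.\ exactly one linear solve of the advertised form, which accounts for the per-iteration cost. The weights are then updated multiplicatively, roughly $w_i^{(t+1)} = w_i^{(t)}\bigl(1+\alpha\,|x_i^{(t)}|\bigr)$ with step size $\alpha\asymp\epsilon$ (suitably truncated), so that coordinates which persistently exceed the unit budget accrue weight and get suppressed in later solves.

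Next I would fix the potential $\Phi^{(t)} = \|\vw^{(t)}\|_1$ and establish a win/win dichotomy. For the upper bound, $\Phi$ is nondecreasing and one shows $\Phi^{(t+1)}\le\Phi^{(t)}\bigl(1+O(\alpha)\cdot(\text{average congestion})\bigr)$ by bounding the energy $\sum_i (w_i^{(t)} + \tfrac{\rho}{m}\|\vw^{(t)}\|_1)\,(x_i^{(t)})^2$ by $O(\Phi^{(t)})$ --- this is where the regularizer is essential, since it caps the squared magnitude that a single overloaded coordinate can contribute (an unregularized solve could place $\Theta(m)$ energy on one coordinate). Comparing against a feasible $\vx^\star$ with $\|\vx^\star\|_\infty\le 1$ converts this energy bound into a bound on the average congestion. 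For the rounding/certifying side: if over a window of $\Theta(\log m/\alpha^2)$ iterations no coordinate is ever more than $(1+\epsilon)$-overloaded, the average of the $\vx^{(t)}$'s is feasible with $\ell_\infty$ norm at most $(1+\epsilon)M$; and whenever the least-squares objective value itself witnesses (via Cauchy--Schwarz / weak duality between the weighted $\ell_2$ value and $\|\cdot\|_\infty$) that the optimum exceeds $(1-\epsilon)M$, we halt with that certificate. Since $\Phi$ cannot grow slowly indefinitely while also repeatedly absorbing large overloads, the algorithm terminates.

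The iteration count comes out of balancing $\rho$ together with a phased shrinking of the budget. With $\alpha\asymp\epsilon$, the ``clean'' regime costs $\otil(1/\epsilon^2)$ iterations as in standard MWU, giving the second term. The overloaded steps --- those where some coordinate beats the budget by more than $\epsilon$ --- are charged against the extra slack the regularizer leaves in $\Phi$, and their number is controlled by a quantity scaling like a power of $m/\rho$; at the same time $\rho$ enters the clean-regime bound because it inflates the effective number of heavy directions the averaged solution must spread over. Choosing $\rho\asymp(m\epsilon^2)^{1/3}$ up to logarithmic factors balances the two contributions and produces the first term $\otil(m^{1/3}/\epsilon^{2/3})$, with the $\log(1/\epsilon)$ reflecting the $O(\log(1/\epsilon))$ outer phases in which the analysis tightens the budget geometrically. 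The $\ell_1$ algorithm runs the same machinery with the $\ell_1$ and $\ell_2$ objectives' roles interchanged: the same regularized solve $\AA\DD\AA^\top\vphi=\vb$ per iteration, but the weight update and termination test now compare $\sum_i|x_i^{(t)}|$ against $M$, and the identical potential argument with the same parameter choice gives the same bound.

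I expect the main obstacle to be the regularization trade-off: one must show that $\tfrac{\rho}{m}\|\vw\|_1$ is simultaneously (i) large enough that every least-squares solve has energy $O(\Phi^{(t)})$ so $\Phi$ grows slowly, and (ii) small enough that the near-feasible averaged iterate is distorted by at most a $(1+\epsilon)$ factor and that the emitted lower bound is still $(1-\epsilon)M$ --- and then to pin down the exponents in the $\rho$ balance precisely, so the two error terms meet at $m^{1/3}\epsilon^{-2/3}$ rather than a weaker power of $m$. A secondary subtlety is checking that the single potential $\Phi$ drives both termination modes with no configuration in which the algorithm stalls (neither rounding nor certifying), which requires carefully coupling the congestion threshold used in the weight update with the $\epsilon$-slack in both output conditions.
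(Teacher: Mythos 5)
There is a genuine gap. Your scheme is essentially the Christiano et al.\ / Chin et al.\ framework (multiplicative weights with a $\tfrac{\rho}{m}\|\vw\|_1$ regularizer in the least-squares solve), which is precisely the approach this paper departs from, and as sketched it cannot reach the claimed bound. The critical step is your assertion that ``the clean regime costs $\otil(1/\epsilon^2)$ iterations as in standard MWU.'' In standard MWU the width $\rho$ enters the regret bound multiplicatively, so the clean regime costs $\otil(\rho/\epsilon^2)$ iterations; balancing that against the $\mathrm{poly}(m/\rho)$ overloaded iterations reproduces the $\otil(m^{1/3}\epsilon^{-8/3})$-type bounds of the prior work (the paper's abstract states the improvement over those is a factor of at least $1/\epsilon^2$), not $\otil(m^{1/3}\epsilon^{-2/3}+1/\epsilon^2)$. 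You do not supply the mechanism that decouples the width from the $1/\epsilon^2$ term, and that decoupling is the entire content of the improvement.

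The paper's mechanism is different in two ways you would need to recover. First, there is no regularizer: the solve is the plain weighted least squares $\arg\min_{\AA\vx=\vb}\langle\vr,\vx^2\rangle$, and iterations where $\|\vx^{(t)}\|_\infty>m^{1/3}M$ are simply excluded from the running average and counted separately (each such iteration multiplies some resistance by at least $m^{2/3}$, so there are at most $O(m^{1/3}\log(1/\epsilon)/\epsilon)$ of them). Second, the update is a \emph{thresholded} rule, $\alpha_i=x_i^2/M^2$ only when $|x_i|\geq(1+\epsilon)M$ and $\alpha_i=1$ otherwise, chosen so that each step satisfies the exact invariant $\bigl(\mathcal{E}_{\vr^{(t+1)}}-\mathcal{E}_{\vr^{(t)}}\bigr)/\|\vr^{(t+1)}-\vr^{(t)}\|_1\geq M^2$; the potential is the electrical energy coupled to $\|\vr\|_1$, not $\|\vw\|_1$ alone, and this invariant is also what makes the returned $\vr/\|\vr\|_1$ a valid $(1-\epsilon)$ infeasibility certificate once $\|\vr\|_1\geq 1/\epsilon$ --- a point your sketch leaves unestablished. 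The iteration count then comes from a counting argument on $\prod_t\sqrt{\alpha_i^{(t)}}$: each factor lies in $[1+\epsilon,m^{1/3}]$ and the factors sum to at least $t'\epsilon$, so the product is at least $\min\{(m^{1/3})^{t'\epsilon/m^{1/3}},(1+\epsilon)^{t'\epsilon/(1+\epsilon)}\}$, giving $t'=O(m^{1/3}\log(1/\epsilon)/\epsilon+\log(m/\epsilon)/\epsilon^2)$ --- the width appears with a $1/\epsilon$, not $1/\epsilon^2$. The final exponent $\epsilon^{-2/3}$ is then obtained by a warm-start/phasing argument that lets the stopping threshold on $\|\vr\|_1$ be a constant rather than $1/\epsilon$, which your $O(\log(1/\epsilon))$-phase remark gestures at but does not connect to the rest of the analysis.
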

While our theorem statements are concerned with approximately solving a decision problem which requires a guess $M$ on the value of the objective, it follows from standard techniques that this can be used to find a good approximation to the optimal solution without paying more than a $\otil(1)$ overhead in the number of iterations. For completeness, we provide the details in Section~\ref{sec:appx_opt}.

\subsection{Relation to Previous IRLS Methods and Slime-Mold Dynamics}
A popular method for solving $\ell_1$ minimization is the iteratively re-weighted least squares method (IRLS). This is essentially based on the observation that whenever $\vx^* = \arg\min_{\AA\vx = \vb} \|\vx\|_1$, one also has that this is the minimizer of the least squares problem $\arg\min_{\vx: \AA \vx = \vb} \langle {1}/{\vx^*}, \vx^2\rangle$.\footnote{Throughout the paper we use the convention that $0/0 = 0$.} Hence one approach that has been employed ever since the 60's~\cite{lawson1961contributions, osborne1985finite, daubechies2010iteratively} is to iteratively adjust the weighting of the coordinates and re-solve the least squares problem, until $\vx$ converges to a stationary point. 
This is rigorously described by the iteration
\[
\vx^{(t+1)} = \arg\min_{\AA \vx = \vb} \left \langle \frac{1}{\vert \vx^{(t)} \vert}, \vx^{2} \right \rangle\,,
\]
We abuse notation by applying scalar operations to vectors, with the meaning that they are applied element-wise.

Subsequent works attempted to rigorously analyze this iteration and prove convergence bounds. Oftentimes this relied on specific structure, such as $\vx$ being sparse~\cite{daubechies2010iteratively}. A recent series of works drew inspiration from convergence proofs for the slime-mold dynamics -- a method which essentially solves $\ell_1$ minimization, based on a model used to describe the evolution of a slime mold (\textit{Physarum polycephalum}) as it spreads through its environment in order to optimize its access to food sources~\cite{nakagaki2000intelligence,tero2007mathematical}. Based on the intuition that these dynamics yield a method for solving the transportation problem, Straszak and Vishnoi proved in a series of works~\cite{StraszakV16,StraszakV16Soda,StraszakV16arxiv} that this is as a matter of fact equivalent to the IRLS method, and provided a rigorous convergence analysis for a damped version of it:
\[
\vx^{(t+1)} = \arg\min_{\AA \vx = \vb} \left \langle \frac{1}{\sqrt{ (\vx^{(t)})^2 + \eta^2 }}, \vx^{2} \right \rangle\,.
\]
Unfortunately their convergence proof shows that this method is highly inefficient, and the time to convergence has a high polynomial dependence in the desired accuracy, and the structure of the linear constraint.

By comparison, what we describe in this work is an IRLS method where the weights
are updated according to a thresholding rule. Given a guess $M$ for the optimal value, we perform an iteration equivalent to:
\begin{align*}
c^{(t+1)}_i &= c^{(t)}_i \cdot \psi_{1/(1-\epsilon)}\left(\frac{ x_i^{(t)}/c_i^{(t)} }{\left\langle \frac{1}{\vc^{(t)}}, \left(\vx^{(t)}\right)^2 \right\rangle} \cdot M\right)^2\,, \\
\vx^{(t+1)} &= \arg\min_{\AA \vx = \vb} \left \langle  \frac{1}{\vc^{(t+1)}}, \vx^2 \right\rangle\,,
\end{align*}
where $\psi$ is a thresholding operator i.e. $\psi_b(u) = u$, if $u\geq b$, and $\psi_b(u)=1$ otherwise.
Intuitively, this increases the weights $c_i$ only for the elements where the corresponding component $x_i^2 / c_i$ of the quadratic objective contributes significantly, therefore we want to favor increasing it even more in the future by decreasing the weight $1/c_i$ we place on this coordinate.\footnote{Another way to think of this is that, ignoring the thresholding operator, the update would simply be $c_i^{(t+1)} = (x_i^{(t)})^2 / c_i^{(t)} \cdot \gamma$, where $\gamma$ is some normalization factor. What thresholding achieves here is to decide whether the contribution of a particular coordinate to the energy of the system is sufficiently large compared to the contributions of the entire vector $\vx$.}


\section{Preliminaries}

\subsection{Basic Notation}
\paragraph{Sets.} We let $\R$ be the set of real numbers. For any natural number $n$, we write $[n] := \{1, \dots, n\}$. We denote by $\Delta_m$ the $m$-dimensional simplex i.e. $\Delta_m = \{\vp \in \R^m : \sum_{i=1}^m \vp_i = 1, \vp_i \geq 0 \textnormal{ for all $i$}  \}$.

\paragraph{Vectors.} 
We let $\vzero, \vones \in {\R}^n$ denote the all zeros and all ones vectors, respectively. When it is clear from the context, we apply scalar operations to vectors with the interpretation that they are applied coordinate-wise. 

\paragraph{Matrices.}
We write matrices in bold. We use $\mI$ to denote the identity matrix. 
Given a vector $\vx$ we let $\mdiag(\vx)$ be the diagonal matrix whose entries are given by $\vx$. 
For a symmetric matrix $\AA$, we let $\AA^+$ be its Moore-Penrose pseudoinverse, i.e. $\AA \AA^+ = \AA^+ \AA = \mI_{\textnormal{Im}(\AA)}$. The pseudoinverse can be thought of as replacing all the nonzero eigenvalues of $\AA$ with their reciprocals.

\paragraph{Inner products.} When it is convenient, we use $\langle \cdot, \cdot \rangle$ notation to denote inner products. Given two vectors $\vx, \vy$ of equal dimensions, we let $\langle \vx, \vy \rangle = \vx^\top \vy$. 

\paragraph{Norms.} Given a vector $\vx$, we denote the $\ell_p$ norm of $\vx$ by $\|\vx\|_p = (\sum x_i^p)^{1/p}$. When the subscript is dropped, we refer to the $\ell_2$ norm. From this definition, we can also see that $\|\vx\|_\infty = \max_i \vert x_i\vert$.

\subsection{Proof Technique}
Let us first understand the idea behind our $\ell_\infty$ minimization algorithm.
The problem we aim to solve is $\min_{\vx : \AA\vx = \vb}\|\vx\|_\infty$. Letting $\Delta_m$ be the $m$-dimensional unit simplex, we can write our objective equivalently as 
\begin{align*}
\min_{\vx:\AA\vx = \vb} \|\vx^2\|_\infty 
= \min_{\vx: \AA \vx = \vb} \max_{\vr \in \Delta_m} \langle \vr, \vx^2 \rangle
= \max_{\vr \in \Delta_m} \left(\min_{\vx: \AA\vx = \vb} \langle \vr, \vx^2 \rangle\right) 
:= \max_{\vr \in \Delta_m} \mathcal{E}_{\vr}(\vb)\,,
\end{align*}
where the second identity follows from Sion's theorem~\cite{sion1958general}, which allows us to interchange min and max. The quantity between the parentheses has a very natural interpretation, in the case of electrical networks: it is precisely the electrical energy required to route a demand $\vb$ through an electrical network encoded in $\AA$. Furthermore, we have an easy way to lower bound how this energy increases whenever resistances are increased, which is a finer quantitative version of Rayleigh's monotonicity principle. More precisely, we can easily certify a lower bound on the increase in energy determined by increasing a single coordinate of $\vr$.  Using this observation, which we make more precise in Section~\ref{sec:electric_prelim}, we can identify a set of coordinates of $\vr$ to increase, guaranteeing that if $\vrp$ is the new vector with perturbed resistances, we have
\begin{equation}\label{eq:update_invariant_linfinity}
\frac{\mathcal{E}_{\vrp}(\vb) - \mathcal{E}_{\vr}(\vb)}{\|\vr' - \vr\|_1} \geq M^2\,,
\end{equation}
for a fixed parameter $M$. In the case when no coordinates of $\vr$ can be increased, while preserving this property, this yields a certificate that $\vr$ is as a matter of fact (close to) optimal, and thus we are done (Lemma~\ref{lem:linf_inc_proof}). Hence our goal becomes that of guaranteeing that $\|\vr\|_1$ increases very fast. Indeed, since the "electrical energy" increases at the right rate relative to $\|\vr\|_1$, after the latter has increased sufficiently, we can safely guarantee that $\mathcal{E}_{\vr}(\vb)/\|\vr\|_1 \geq (1-\epsilon)M$, since the increase in $\|\vr\|_1$ cancels out most of the initial error introduced by starting with a potentially poor solution.

The $\ell_1$ minimization algorithm relies on squaring the objective, and then writing it equivalently as 
\begin{align*}
\min_{\vx: \AA\vx = \vb} \|\vx\|_1^2 
=
 \min_{\vx: \AA\vx = \vb} \left(\min_{\vc \in \Delta_m} \left\langle \frac{1}{\vc}, \vx^2 \right\rangle  \right) 
 =
\min_{\vc \in \Delta_m}\left(  \min_{\vx: \AA\vx = \vb}  \left\langle \frac{1}{\vc},\vx^2 \right\rangle  \right)
= \min_{\vc\in\Delta_m} \mathcal{E}_{1/\vc}(\vb)\,.
\end{align*}
For the first identity we used the fact that $\|\vx\|_1^2 = \min_{\vc\in{\Delta}_m}\langle 1/\vc, \vx^2 \rangle$, achieved at $\vc = \vx / \|\vx\|_1$; see~\cite{owen2007robust,sun2012scaled} for further use of this trick.\footnote{Interestingly, this can also be thought of as achieving tightness for reverse H\"older's inequality whenever we are considering the dual `norms' $\ell_{-1}$ and $\ell_{1/2}$.} The second identity follows from joint convexity w.r.t. $\vc$ and $\vx$, which can be verified by computing the Hessian of the function in $(\vx, \vc)$. So completely oppositely from the previous case, the objective of our problem becomes minimizing electrical energy with respect to a set of inverse resistances, which we will call conductances. Note that in this case the quantity that is invariant under scaling $\vc$ by a constant is $\mathcal{E}_{1/\vc} \cdot \|\vc\|_1$. Therefore, equivalently, our goal will be to find the set of conductances $\vc \geq 0$ for which $\left(\mathcal{E}_{1/\vc}\right)^{-1} / \|\vc\|_1 \geq \frac{1}{(1+\epsilon)M}$. 
Similarly to the $\ell_\infty$ case, in this case we make progress by iteratively increasing conductances from $\vc$ to $\vc'$ in such a way that
\begin{equation}\label{eq:update_invariant_lone}
\frac{ \frac{1}{\mathcal{E}_{\vcp}(\vb)}-\frac{1}{\mathcal{E}_{\vc}(\vb)} }{\|\vcp-\vc\|_1} \geq \frac{1}{M^2}\,.
\end{equation}
Just as before, we can prove that unless the value of the objective can not be made smaller than $M$, then $\vc$ can be increased while enforcing this invariant (Lemma~\ref{lem:l1-correct-iterations}). Hence we can prove fast convergence by arguing that $\|\vc\|_1$ increases very fast.

\subsection{Approximate Solutions and Infeasibility Certificates}\label{sec:certificates}
\paragraph{$\ell_\infty$ minimization} We consider the formulation
\begin{equation}
\min_{\AA \vx = \vb} \|\vx\|_\infty\,, \label{eq:linfreg}
\end{equation}
for which we seek an approximate solution in the following sense. Given a target value $M$, we aim to find one of the following:
\begin{enumerate}
\item an approximate solution $\vx$ in the sense that $\AA \vx = \vb$ and $\|\vx\|_\infty \leq (1+\epsilon) M$,
\item an approximate infeasibility certificate $\vr$ in the sense that $\vr \in \Delta_m$ and $\vb^{\top}( \AA \mdiag(\vr)^{-1}  \AA^\top )^+ \vb \geq (1-\epsilon) ^2 M^2$.
\end{enumerate}
We prove in Lemma~\ref{lem:linfcertif} that the latter is indeed an infeasibility certificate.
\begin{lem}\label{lem:linfcertif} 
Let $\vx^*$ be the solution to the problem defined in Equation~\ref{eq:linfreg}, and let $\vr \in \Delta_m$. Then $\|\vx^*\|_\infty^2 \geq  \vb^\top (\AA\mdiag(\vr)^{-1}\AA^{\top})^+ \vb$.
\end{lem}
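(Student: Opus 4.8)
The plan is to identify the right-hand side $\vb^\top\big(\AA\mdiag(\vr)^{-1}\AA^\top\big)^+\vb$ with the electrical energy $\mathcal{E}_{\vr}(\vb)=\min_{\vx:\AA\vx=\vb}\langle\vr,\vx^2\rangle$, and then conclude by the ``$\min\max\ge\max\min$'' inequality already exploited in the Proof Technique section, now specialized to the single resistance vector $\vr$. Since we only need one direction of that inequality, no interchange of $\min$ and $\max$ (hence no appeal to Sion's theorem) is required, and the argument is elementary.

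\textbf{Step 1: the energy identity.} First I would show that for $\vr\in\Delta_m$,
\[
\vb^\top\big(\AA\mdiag(\vr)^{-1}\AA^\top\big)^+\vb \;=\; \min_{\vx:\AA\vx=\vb}\langle\vr,\vx^2\rangle .
\]
Assume first $\vr>\vzero$. Substituting $\vx=\mdiag(\vr)^{-1/2}\vy$ turns the problem into $\min\{\|\vy\|_2^2 : \mvar{B}\vy=\vb\}$ with $\mvar{B}:=\AA\mdiag(\vr)^{-1/2}$; this system is feasible because $\vb$ lies in the column span of $\AA$, hence in the column span of $\mvar{B}$. Its minimum-norm solution is $\vy^\star=\mvar{B}^+\vb$, so the optimal value equals $\|\mvar{B}^+\vb\|_2^2=\vb^\top(\mvar{B}^+)^\top\mvar{B}^+\vb=\vb^\top(\mvar{B}\mvar{B}^\top)^+\vb$, where the last equality is the standard identity $(\mvar{B}\mvar{B}^\top)^+=(\mvar{B}^+)^\top\mvar{B}^+$ (immediate from the SVD of $\mvar{B}$), and $\mvar{B}\mvar{B}^\top=\AA\mdiag(\vr)^{-1}\AA^\top$. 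For a general $\vr\in\Delta_m$, coordinates with $r_i=0$ contribute nothing to $\langle\vr,\vx^2\rangle$, so one restricts both sides to $\supp(\vr)$ (using the convention $0/0=0$); alternatively the identity extends to the boundary of $\Delta_m$ by continuity from its relative interior.

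\textbf{Step 2: pointwise bound and conclusion.} Because $\vr\in\Delta_m$, i.e. $\vr\ge\vzero$ and $\sum_i r_i=1$, every vector $\vx$ satisfies
\[
\langle\vr,\vx^2\rangle=\sum_i r_i x_i^2 \;\le\; \Big(\max_i x_i^2\Big)\sum_i r_i \;=\; \|\vx\|_\infty^2 .
\]
I would then apply this with $\vx=\vx^*$ and use that $\vx^*$ is feasible for the minimization in Step 1 ($\AA\vx^*=\vb$), obtaining
\[
\|\vx^*\|_\infty^2 \;\ge\; \langle\vr,(\vx^*)^2\rangle \;\ge\; \min_{\vx:\AA\vx=\vb}\langle\vr,\vx^2\rangle \;=\; \vb^\top\big(\AA\mdiag(\vr)^{-1}\AA^\top\big)^+\vb ,
\]
which is exactly the claim.

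I do not expect a genuine obstacle here: this is the ``weak duality'' half of the reformulation $\min_{\vx:\AA\vx=\vb}\|\vx\|_\infty^2=\max_{\vr\in\Delta_m}\mathcal{E}_{\vr}(\vb)$. The only point that needs a little care is making the energy identity of Step 1 precise when $\vr$ has zero coordinates, since then $\mdiag(\vr)^{-1}$ and the associated pseudoinverse must be read through the $0/0=0$ convention (equivalently, via restriction to $\supp(\vr)$ or a limiting argument from $\vr>\vzero$).
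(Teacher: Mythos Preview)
Your proposal is correct and follows essentially the same approach as the paper: identify the right-hand side with $\min_{\vx:\AA\vx=\vb}\langle\vr,\vx^2\rangle$, plug in the feasible point $\vx^*$, and bound $\langle\vr,(\vx^*)^2\rangle\le\|\vr\|_1\|\vx^*\|_\infty^2=\|\vx^*\|_\infty^2$. The only difference is that the paper obtains the energy identity by citing its Lemma~\ref{lem:energy_charact} (proved via Lagrange duality), whereas you re-derive it directly through the substitution $\vx=\mdiag(\vr)^{-1/2}\vy$ and the minimum-norm least-squares formula; both routes are standard and equivalent.
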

\begin{proof}
Using Lemma~\ref{lem:energy_charact} we can write
\begin{align*}
\vb^{\top} (\AA\DD(\vr)^{-1} \AA^{\top})^+ \vb 
= \min_{\vx : \AA \vx = \vb} \langle  \vr, \vx^2 \rangle
\leq \langle \vr, (\vx^*)^2 \rangle 
\leq \|\vr\|_1 \|\vx^*\|_\infty^2 
= \|\vx^*\|_\infty^2 \,,
\end{align*}
which gives us what we needed.
\end{proof}

\paragraph{$\ell_1$ minimization} We consider the formulation

\begin{equation}
\min_{\AA \vx = \vb} \|\vx\|_1 \,,\label{eq:l1reg}
\end{equation}
for which seek an approximate solution in the following sense. Given a target value $M$, we seek one of the following:
\begin{enumerate}
\item an approximate infeasibility certificate $\vphi \in \R^n$ in the sense that $\frac{\vb^\top \vphi}{\|\AA^\top \vphi\|_\infty} \geq (1-\epsilon)M$,
\item an approximate 
 feasibility certificate $\vc$ in the sense that $\vc \in \Delta_m$ and ${\vb^{\top}( \AA \mdiag(\vc)  \AA^\top )^+ \vb} \leq (1+\epsilon)^2 M^2$, which yields an approximately feasible solution
  $\vx = \mdiag(\vc) \AA^\top (\AA\mdiag(\vc)\AA^\top)^+ \vb$ in the sense that $\AA \vx = \vb$ and $\|\vx\|_1 \leq (1+\epsilon)M$.
\end{enumerate}
The fact that the former is an approximate infeasibility certificate follows from convex duality. Indeed, one can see that the dual of the minimization problem is $\max_{\vphi : \|\AA^\top \vphi\|_\infty \leq 1} \vb^\top \vphi$, so exhibiting a solution as above implies that the value of this objective is at least $(1-\epsilon) M$.
A proof for the fact that the latter is indeed an approximate feasibility certificate, and that it yields an approximately feasible solution can be found in Lemma~\ref{lem:l1certif}.

\begin{lem}\label{lem:l1certif}
Given $\vc\in \Delta_m$, the vector $\vx = \mdiag(\vc) \AA^\top (\AA\mdiag(\vc)\AA^\top)^+ \vb$  satisfies $\AA \vx = \vb$, and $\|\vx\|_1^2 \leq \vb^\top (\AA \mdiag(\vc) \AA^\top)^+ \vb$.
\end{lem}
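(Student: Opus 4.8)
The plan is to split the claim into its two parts. The feasibility statement $\AA\vx = \vb$ is essentially bookkeeping with the pseudoinverse, while the norm bound $\|\vx\|_1^2 \le \vb^\top(\AA\mdiag(\vc)\AA^\top)^+\vb$ is, after unwinding the definition of $\vx$, nothing more than a weighted Cauchy--Schwarz inequality in which $\vc$ plays the role of a probability vector.

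First I would set $\MM = \AA\mdiag(\vc)\AA^\top$ and $\vphi = \MM^+\vb$, so that $\vx = \mdiag(\vc)\AA^\top\vphi$ and hence $\AA\vx = \MM\MM^+\vb$. Since $\MM\MM^+$ is the orthogonal projection onto $\textnormal{Im}(\MM)$, feasibility follows once we know $\vb\in\textnormal{Im}(\MM)$; this holds because $\vb$ lies in the span of the columns of $\AA$, which (for $\vc$ of full support, the relevant case in the algorithm) coincides with $\textnormal{Im}(\MM)$. Consequently $\vb = \MM\vphi$.

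For the norm bound I would write $\vy = \AA^\top\vphi$, so that $x_i = c_i y_i$ coordinatewise. Using $\vb = \MM\vphi$ gives $\vb^\top\MM^+\vb = \vphi^\top\MM\vphi = \sum_i c_i y_i^2$, while $c_i \ge 0$ gives $\|\vx\|_1 = \sum_i c_i|y_i|$. Applying Cauchy--Schwarz in the form $\sum_i \sqrt{c_i}\cdot(\sqrt{c_i}\,|y_i|) \le (\sum_i c_i)^{1/2}(\sum_i c_i y_i^2)^{1/2}$ and invoking $\sum_i c_i = 1$ (since $\vc\in\Delta_m$), then squaring, yields $\|\vx\|_1^2 \le \sum_i c_i y_i^2 = \vb^\top(\AA\mdiag(\vc)\AA^\top)^+\vb$.

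The only point that requires any care is the feasibility claim: when $\vc$ has zero coordinates one must verify that $\vb$ still belongs to $\textnormal{Im}(\AA\mdiag(\vc)\AA^\top)$ rather than merely to the column span of $\AA$; in the algorithm the conductances stay strictly positive, so this is not an obstacle in practice. Everything else is a one-line computation plus a single application of Cauchy--Schwarz — equivalently, the elementary inequality $\E_{\vc}[|Y|]^2 \le \E_{\vc}[Y^2]$ when $\vc$ is regarded as a distribution.
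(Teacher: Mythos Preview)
Your proposal is correct and follows essentially the same route as the paper: both arguments verify feasibility via the pseudoinverse projection and then obtain the norm bound from the identity $\vb^\top(\AA\mdiag(\vc)\AA^\top)^+\vb = \sum_i x_i^2/c_i$ (which in your notation is $\sum_i c_i y_i^2$) together with Cauchy--Schwarz against the weights $\sqrt{c_i}$. Your remark about the subtlety when $\vc$ has zero coordinates is a fair point that the paper elides, but otherwise the two proofs are the same computation.
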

\begin{proof}
The fact that $\AA\vx = \vb$ follows directly by substitution, and using the fact that $\vb \in \textnormal{Im}(\AA)$.
Using Lemma~\ref{lem:energy_charact} and the definition in (\ref{eq:energy_def}) we write
\begin{align*}
\vb^\top (\AA \mdiag(\vc) \AA^\top)^+ \vb 
&=
\vb^\top (\AA \mdiag(\vc) \AA^\top)^+ (\AA \mdiag(\vc) \AA^\top) (\AA \mdiag(\vc) \AA^\top)^+ \vb 
\\
&= \sum_{i=1}^m \frac{1}{c_i} \cdot \left(  \mdiag(\vc) \AA^\top (\AA\mdiag(\vc)\AA^\top)^+ \vb \right)^2 \\
&= \sum_{i=1}^m \frac{1}{c_i} \cdot x_i^2
\,.
\end{align*}
We can use this identity inside the following upper bound, which we obtain by applying Cauchy-Schwarz:
\begin{align*}
\|\vx\|_1 = \sum_{i=1}^m { \frac{\vert x_i \vert}{ \sqrt{c_i}} } \cdot \sqrt{c_i} \leq \sqrt{ \left( \sum_{i=1}^m \frac{x_i^2}{c_i} \right) \left( \sum_{i=1}^m c_i \right)   }
=\sqrt{\vb^\top (\AA \mdiag(\vc) \AA^\top)^+ \vb }
\,.
\end{align*}
This yields our claim.
\end{proof}


\section{The Algorithms}\label{sec:descriptions}
Having introduced the necessary notation, we can describe our simple IRLS routine. We prove convergence in Section~\ref{sec:analysis}.

\subsection{The $\ell_\infty$ Minimization Algorithm}\label{subsec:linf_min}
We first present the algorithm for the $\ell_\infty$ version of the problem, since it is the most intuitive. The method attempts to find a weighting of the columns of $\AA$ i.e. a vector $\vr \in \R^m$ for which the corresponding least squares solution has a small $\ell_\infty$ norm; more precisely $\|\vx\|_\infty / \|\vr\|_1 \leq (1+\epsilon)M$ for some chosen target value $M$.

Then the weighting is updated via the following simple thresholding rule. Elements for which the corresponding coordinate of the least squares solution $x_i$ is below the desired target value are left unchanged. The others are scaled exactly by the amount by which the square of the corresponding coordinate $x_i$ violates the desired threshold i.e. $x_i^2 / M^2$.

Note that the iteration defined here simply attempts to construct an infeasibility certificate for the problem defined in Equation~\ref{eq:linfreg}. Building the feasible solution involves maintains a solution obtained by uniformly averaging a subset of the iterates $\vx$ witnessed so far. These are used to return the approximately feasible solution in case the algorithm fails to quickly produce an (approximate) infeasibility certificate. The details referring to how and why we perform this specific set of updates are explained in the convergence proof. The steps involved in building this feasible solution are written in \textcolor{blue}{blue}. They can be ignored if the goal is simply that of returning a yes/no answer.

\begin{algorithm}[tb]
   \caption{$\textsc{$\ell_\infty$-Minimization}(\AA, \vb, \epsilon, M)$}\label{fig:linf}
   \SetAlgoLined
\begin{algorithmic}[1]
   \STATE {\bfseries Input:} Matrix $\AA \in \R^{n\times m}$, vector $\vb \in \R^{n}$, accuracy $\epsilon$, target value $M$.
	\STATE  {\bfseries Output:} Vector $\vx$ such that $\AA\vx = \vb$ and $\|\vx\|_\infty \leq (1+\epsilon) M$, or approximate infeasibility certificate $\vr \in \Delta_m$.
	\STATE $t = 0$, $\vr^{(0)} = \vones/m$.
	\STATE \textcolor{blue}{$t' = 0$, $\vs^{(t')} = \vec{0}$}.
	\WHILE{$\| \vr^{(t)} \|_1 \leq 1/\epsilon$} 
	\STATE $\vx^{(t)} = \arg\min_{\vx : \AA \vx = \vb} \langle \vr, \vx^2 \rangle$.\hfill
\COMMENT{{Equivalently, $\vx^{(t)} = \mdiag(\vr)^{-1} \AA^\top \left(\AA \mdiag(\vr)^{-1} \AA^\top \right)^+ \vb$.}}
\textcolor{blue}{
	\IF{$\| \vx^{(t)} \|_\infty \leq {m^{1/3}} \cdot M$} \label{line:linf_average_or_not}
	\STATE $t' = t'+1$, $\vs^{(t')} = \vs^{(t'-1)} + \vx^{(t)}$. 
	\ENDIF
	\IF{$\| \vs^{(t')}  \|_\infty / t' \leq (1+\epsilon) M$}
	\STATE {\bfseries return $\vs^{(t')} / t'$}.
	\ENDIF
	}
	\STATE  $\alpha^{(t)}_i =
 \begin{cases} 
 1 & \textnormal{if } \vert x^{(t)}_i \vert  < (1+\epsilon) M, 
\, \\ 
 \frac{ (x^{(t)}_i)^2  }{M^2} & \textnormal{otherwise. }
 \end{cases}$
 	\IF{$\valpha^{(t)} = \vones$}
	\STATE {\bfseries return $\vx^{(t)}$. }
	\ENDIF
	\STATE $\vr^{(t+1)} = \vr^{(t)} \cdot \valpha^{(t)}$.
	\STATE $t = t+1$.
	\ENDWHILE
	\STATE {\bfseries return} $\vr^{(t)} /\|\vr^{(t)}\|_1$. 
\end{algorithmic}
\end{algorithm}

\paragraph{Correctness.}
We notice that Algorithm~\ref{fig:linf} has two possible outcomes. Either it returns a primal approximately feasible vector (lines~\REFlineretone and~\REFlinerettwo), or returns a dual certificate (line~\REFlineretthree).
In the former case, it is clear from the description of the algorithm that the returned vector is indeed approximately feasible: line~\REFlineretone\, returns a uniform average of vectors satisfying the linear constraint with small $\ell_\infty$ norm; line~\REFlineretthree\, returns the $\vx^{(t)}$ computed within the corresponding iteration, whenever $\valpha^{(t)} =  \vec{1}$, i.e. $\|\vx^{(t)}\|_\infty < (1+\epsilon) M$. 

Also, note that in case none of these stopping conditions is triggered, the algorithm returns a dual certificate on line~\REFlineretthree\, after a finite number of iterations. Indeed, note that every iteration where ${\alpha}_i^{(t)} \neq \vec{1}$, at least one element of $\vr^{(t)}$ gets increased by a factor of at least $(1+\epsilon)^2$, due to way $\valpha^{(t)}$ is defined. Since the algorithm stops when $\|\vr^{(t)}\|_1 = 1/\epsilon$, no element of $\vr$ can be scaled more than $O(\log_{(1+\epsilon)} (m/\epsilon))$ times, hence the total number of iterations is very roughly upper bounded by $O(m\log(m/\epsilon)/\epsilon)$. We will see in Section~\ref{sec:analysis} that we can prove a much finer upper bound. 

Finally, we need to argue that whenever the algorithm returns on line~\REFlineretthree, it returns an infeasibility certificate as per Lemma~\ref{lem:linfcertif}. We defer the proof to Lemma~\ref{lem:linf_inc_proof} in  Section~\ref{sec:analysis}.

\subsection{The $\ell_1$ Minimization Algorithm}\label{subsec:lone_min}

The $\ell_1$ version is very similar. As a matter of fact, it can be re-derived simply by attempting to solve the convex dual of the problem from (\ref{eq:linfreg}), which is an $\ell_\infty$ minimization problem, by using the routine from Figure~\ref{fig:linf}. However, since the reduction requires several, and previous works attempted to solve this directly using various versions of IRLS, we provide a natural iteration which does not involve any reductions.

\begin{algorithm}[tb]
   \caption{$\textsc{$\ell_1$-Minimization}(\AA, \vb, \epsilon, M)$}\label{fig:lone}
   \SetAlgoLined
\begin{algorithmic}[1]
   \STATE {\bfseries Input:} Matrix $\AA \in \R^{n\times m}$, vector $\vb \in \R^{n}$, accuracy $\epsilon$, target value $M$.
	\STATE  {\bfseries Output:} Vector $\vx$ such that $\AA\vx = \vb$ and $\|\vx\|_1 \leq (1+\epsilon) M$, or approximate infeasibility certificate $\vphi \in \Delta_n$.
	\STATE $t = 0$, $\vc^{(0)} = \vones/m$.
	\STATE \textcolor{blue}{$t' = 0$, $\vs^{(t')} = \vzero$, ${\vPhi}^{(0)} = \vzero$}.
	\WHILE{$\| \vc^{(t)} \|_1 \leq 1+\frac{1}{(1+\epsilon)^2-1}$}
	\STATE
	$\vphi^{(t)} = \left(\AA \mdiag(\vc) \AA^\top  \right)^+ \vb$.\hfill
\COMMENT {
 Equivalently, $\vphi^{(t)}$ is the vector of potentials which induce the electrical flow $\vx = \arg\min_{\AA\vx=\vb} \langle 1/\vc, \vx^2 \rangle$ via $\vx = \mdiag(\vc) \AA^\top \vphi$.
}
\textcolor{blue}{
	\IF{$\left\| \frac{\AA^\top \phi^{(t)}}{\vb^\top \vphi^{(t)}} \right\|_\infty \leq {m^{1/3}} \cdot \frac{1}{M}$} 
	\STATE $t' = t'+1$, $\vs^{(t')} = \vs^{(t'-1)} + \left\vert \frac{\AA^\top \vphi^{(t)}}{\vb^\top \vphi^{(t)}}\right\vert$, $\vPhi^{(t')} = \vPhi^{(t'-1)} + \frac{\vphi^{(t)}}{\vb^\top \vphi^{(t)}}$. 
	\ENDIF
	\IF{$\|  \vs^{(t')}  \|_\infty / t' \leq \frac{1}{(1-\epsilon) M}$}
	\STATE {\bfseries return $\vPhi^{(t')} / t'$}.
	\ENDIF
	}
	\STATE  $\alpha^{(t)}_i =
 \begin{cases} 
 1 & \textnormal{if }
  \frac{
\vert \AA^\top \vphi^{(t)} \vert_i
 }{
   \vb^\top \vphi^{(t)}
 } 
 \leq \frac{1}{(1-\epsilon) M}, 
\, 
\\ 
\left(
 \frac{ 
(\AA^\top \vphi^{(t)})_i
 }{
 \vb^\top \vphi^{(t)}
 }
 \right)^2 \cdot M^2 & \textnormal{otherwise. }
  \\ 
 \end{cases}$
 	\IF{$\valpha^{(t)} = \vones$}
	\STATE {\bfseries return $\vphi^{(t)}$. }
	\ENDIF
	\STATE $\vc^{(t+1)} = \vc^{(t)} \cdot \valpha^{(t)}$.
	\STATE $t = t+1$.
	\ENDWHILE
	\STATE {\bfseries return} $\vx = \mdiag(\vc^{(t)}) \AA^\top \vphi^{(t)}$.%
\end{algorithmic}
\end{algorithm}

\paragraph{Correctness.}

We notice that Algorithm~\ref{fig:lone} has two possible outcomes. Either it returns an approximate infeasibility certificate (lines~\REFlineloneretone\, and~\REFlinelonerettwo), or returns an approximately feasible solution (line~\REFlineloneretthree).

Let us verify that in the former case the returned vector is indeed an approximate infeasibility certificate. Line~\REFlineloneretone\, returns ${\vPhi}^{(t')} = \sum_{t \in S} \frac{\vphi^{(t)}}{ \vb^\top \vphi^{(t)} }$, where we know that $S$ is a set for which 
\begin{align*}
&\left\Vert \AA^\top {\vPhi}^{(t')} \right\Vert_\infty
=
\left\Vert \AA^\top \cdot \sum_{t\in S}  \frac{  \vphi^{(t)}}{\vb^\top \vphi^{(t)}} \right\Vert_\infty
\\
&=
\left\Vert \sum_{t\in S}  \frac{ \AA^\top \vphi^{(t)}}{\vb^\top \vphi^{(t)}} \right\Vert_\infty
\leq
\left\Vert \sum_{t\in S} \left\vert \frac{\AA^\top \vphi^{(t)}}{\vb^\top \vphi^{(t)}} \right\vert \right\Vert_\infty \leq \frac{t'}{(1-\epsilon) M}\,.
\end{align*}
Since $\vb^\top {\vPhi}^{(t')} = t'$, we see that returned vector ${\vPhi}^{(t')} / t'$ is an approximate infeasibility certificate, as defined in Section~\ref{sec:certificates}. If the algorithm returns on line~\REFlinelonerettwo, we get that $\left\Vert  \frac{\AA^\top \vphi^{(t)}}{\vb^\top \vphi^{(t)}} \right\Vert_\infty \leq \frac{1}{(1-\epsilon)M}$, hence $\vphi^{(t)}$ is an approximate infeasibility certificate.

Also, note that in case none of these stopping conditions is triggered, the algorithm returns a solution on line~\REFlineloneretthree\, after a finite number of iterations. Indeed, just as in the $\ell_\infty$ case, in every iteration some conductance gets increased by a factor of at least $\Omega(1+\epsilon)$, hence the algorithm must stop in finite time. We provide a rigorous analysis of the time required for convergence in Section~\ref{sec:analysis}.

Finally, we need to argue that whenever the algorithm returns a solution on line~\REFlineloneretthree, it is indeed an approximately feasible solution. We defer the proof to Lemma~\ref{lem:l1-correct-iterations} in  Section~\ref{sec:analysis}.


\section{The Algorithm Analyses}\label{sec:analysis}

\subsection{The Flow/Potential Interpretation}\label{sec:flowpotinter}
While we study a very general problem, it is very useful to develop intuition based on the case where $\AA$ is the vertex-edge incidence matrix of a graph. In this case we will always think of the sought solution $\vx$ as a flow on the graph's edges. The corresponding dual object is a set of potentials $\vphi$ defined on the graph's vertices.

To be more precise, we consider the following setting. Let $G = (V,E)$ be an undirected graph. For each edge we choose an arbitrary orientation, and define $E^+(v)$ be the set of arcs leaving vertex $v$, and $E^-(v)$ the set of arcs entering vertex $v$, for all $v$.

Letting $m = \vert E \vert$, $n = \vert V \vert$, we consider the matrix $\AA \in \R^{n \times m}$ where
\[
\AA_{ve} = 
\begin{cases} 
+1 & \textnormal{if } e \in E^+(v), \\  
-1 & \textnormal{if } e \in E^-(v), \\  
0 &\textnormal{otherwise.}
\end{cases}
\]

One can easily verify that given a vector $\vx \in \R^m$ defined on the arcs of the graph (which we will think of as a \textbf{flow}), after applying the operator $\AA$ we obtain the demand routed by this flow $\AA \vx \in \R^n$, which lives in the space of \textbf{potentials} defined on the graph's vertices.

Therefore the $\ell_\infty$ minimization problem from \ref{eq:linfreg} can be interpreted as finding the flow $\vx$ with minimum congestion which routes the demand $\vb$, while the $\ell_1$ minimization problem from \ref{eq:l1reg} corresponds to finding the minimum cost flow routing the demand $\vb$.

With this interpretation in mind, we proceed to define some objects that in the case of electrical networks correspond to energy and electrical flows.

We use weightings of $\AA$'s columns $\vc \in \R^m$ which we refer to as \textbf{conductances}. We equivalently refer to the reciprocals $\vr \in \R^m$, with $\vr_i = 1/\vc_i$, which we call \textbf{resistances}. Our analysis is exclusively based on tracking a potential function which corresponds to the electrical energy of a flow.

\begin{defn}[Energy of a flow]
Given a flow $\vx \in \R^m$, along with a vector of resistances $\vr \in \R^m$, we let the energy of $\vx$ be
\[
\mathcal{E}_{\vr}(\vx) = \langle \vr, \vx^2 \rangle\,.
\]
Overloading this notation, given a vector $\vb \in \R^n$, let the \textbf{electrical energy} be
\begin{equation}\label{eq:energy_def}
\mathcal{E}_{\vr}(\vb) = \min_{\vx : \AA \vx = \vb} \mathcal{E}_{\vr}(\vx)\,,
\end{equation}
in other words this is the minimum energy over all flows satisfying $\AA \vx = \vb$. We drop the argument whenever $\vb$ is clear from the context.
\end{defn}

\subsection{Preliminaries on Electrical Energy}\label{sec:electric_prelim}
Throughout the paper, our analyses will rely on a potential function, which in the case of resistor networks corresponds to the electrical energy. In this section we provide a few useful facts.

\begin{lem}[Characterization of Electrical Energy]\label{lem:energy_charact}
Given a vector of resistances $\vr \in \R^m$, we have the following equivalent characterizations for the electrical energy.
\begin{align}
\mathcal{E}_{\vr}(\vb) &= \vb^\top \left(  \AA \mdiag(\vr)^{-1} \AA^\top \right)^+ \vb \\
&=\max_{\phi} 2\cdot \vb^{\top} \vphi  - \sum_{i=1}^m \frac{\left(\AA^{\top} \vphi \right)_i ^2}{  r_i } \label{eq:energy_max_char} \\
&= \left( \min_{\vphi : \vb^{\top} \vphi = 1} \sum_{i=1}^m \frac{\left(\AA^{\top} \vphi\right)_i^2}{ r_i }\right)^{-1}\,.\label{eq:energy_inverse_char}
\end{align}
Furthermore, if $\vx$ is the minimizing flow for the expression in (\ref{eq:energy_def}), and $\vphi$ is the maximizing set of potentials for the expression in (\ref{eq:energy_max_char}), then for all $i$:
\begin{equation}
x_i = (\AA^\top \vphi)_i / r_i\,.  \label{eq:flow_pot_equival}
\end{equation}
\end{lem}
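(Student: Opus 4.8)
The plan is to establish the three characterizations by a standard chain of convex-duality manipulations, and then read off the flow/potential relation as a byproduct of the optimality conditions.

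First I would handle the equality between \eqref{eq:energy_def} and the matrix formula $\mathcal{E}_{\vr}(\vb) = \vb^\top(\AA\mdiag(\vr)^{-1}\AA^\top)^+\vb$. This is the classical fact that the minimum-energy flow routing $\vb$ is the electrical flow. I would write $\DD = \mdiag(\vr)^{-1}$, substitute $\vx = \DD^{1/2}\vy$ so that the objective becomes $\|\vy\|_2^2$ subject to $\AA\DD^{1/2}\vy = \vb$, and note that the minimum-norm solution of this linear system (using $\vb \in \textnormal{Im}(\AA) = \textnormal{Im}(\AA\DD^{1/2})$ when $\vr > 0$, with the $0/0=0$ convention covering zero conductances) is $\vy = (\AA\DD^{1/2})^\top(\AA\DD\AA^\top)^+\vb$. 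Plugging back in and simplifying $\vy^\top\vy = \vb^\top(\AA\DD\AA^\top)^+(\AA\DD\AA^\top)(\AA\DD\AA^\top)^+\vb = \vb^\top(\AA\DD\AA^\top)^+\vb$ gives the claim; this same computation already appears in the proof of Lemma~\ref{lem:l1certif}, so I would reuse it.

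Next I would derive \eqref{eq:energy_max_char}. Introduce a Lagrange multiplier $\vphi$ for the constraint $\AA\vx = \vb$ in \eqref{eq:energy_def}, giving $\min_{\vx}\max_{\vphi}\left(\langle\vr,\vx^2\rangle + 2\vphi^\top(\vb - \AA\vx)\right)$; since the objective is convex in $\vx$, strictly so on the relevant subspace, and linear in $\vphi$, Sion's theorem lets me swap to $\max_{\vphi}\min_{\vx}$. The inner minimization over $\vx$ is an unconstrained quadratic: setting the gradient $2\mdiag(\vr)\vx - 2\AA^\top\vphi = 0$ yields $x_i = (\AA^\top\vphi)_i/r_i$ — which is precisely \eqref{eq:flow_pot_equival} — and substituting back gives exactly $2\vb^\top\vphi - \sum_i (\AA^\top\vphi)_i^2/r_i$. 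One subtlety: when some $r_i = 0$ the inner minimum is $-\infty$ unless $(\AA^\top\vphi)_i = 0$, which is consistent with interpreting $(\AA^\top\vphi)_i^2/r_i$ under the $0/0=0$ convention as $+\infty$ when the numerator is nonzero; I would remark on this so the maximization is well-defined. For \eqref{eq:energy_inverse_char}, I would take the expression in \eqref{eq:energy_max_char}, observe that the quadratic $2\vb^\top\vphi - \sum_i(\AA^\top\vphi)_i^2/r_i$ is positively-homogeneous-adjustable: for fixed direction, scaling $\vphi \mapsto s\vphi$ and optimizing over $s$ gives value $(\vb^\top\vphi)^2 / \sum_i(\AA^\top\vphi)_i^2/r_i$, and then maximizing this Rayleigh-type quotient over all $\vphi$ is equivalent, after normalizing $\vb^\top\vphi = 1$, to $\left(\min_{\vphi:\vb^\top\vphi=1}\sum_i(\AA^\top\vphi)_i^2/r_i\right)^{-1}$.

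Finally, the flow/potential identity \eqref{eq:flow_pot_equival} falls out of the stationarity condition already used above; I would just note that the $\vx$ achieving the min in \eqref{eq:energy_def} and the $\vphi$ achieving the max in \eqref{eq:energy_max_char} are linked through the saddle point, so $x_i = (\AA^\top\vphi)_i/r_i$ holds simultaneously. The main obstacle I anticipate is purely bookkeeping: making sure the pseudoinverse identities and the min-max swap are all valid when $\vr$ has zero entries (equivalently $\mdiag(\vr)^{-1}$ is only defined under the $0/0 = 0$ convention) and when $\vb$ is merely in $\textnormal{Im}(\AA)$ rather than all of $\R^n$ — I would dispatch this by restricting attention to the support of $\vr$ and to $\textnormal{Im}(\AA)$, where everything is genuinely positive-definite, and then checking the conventions extend the formulas consistently.
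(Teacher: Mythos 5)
Your proposal is correct and follows essentially the same route as the paper: the heart of the argument is the Lagrangian saddle-point computation, with the stationarity condition $x_i = (\AA^\top\vphi)_i/r_i$ giving both the dual characterization \eqref{eq:energy_max_char} and the flow/potential identity. The only (harmless) deviations are that you derive the pseudoinverse formula directly via the minimum-norm solution rather than reading it off the dual optimality condition $(\AA\mdiag(\vr)^{-1}\AA^\top)\vphi = \vb$, and you obtain \eqref{eq:energy_inverse_char} by a homogeneous-scaling argument where the paper simply notes that both expressions are optimized at $\vphi = (\AA\mdiag(\vr)^{-1}\AA^\top)^+\vb$; your explicit treatment of zero resistances is a point the paper glosses over.
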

Since the proof is standard, we defer it to Section~\ref{sec:pf_l1}.

As a corollary, we can derive a lower bound on the increase in energy after increasing resistances.

\begin{lem}\label{lem:lb_inc_en} Let $\vr, \vrp$, and let $\vx = \arg\min_{\vx : \AA \vx = \vb} \langle \vr, \vx^2\rangle$. Then, one has that
\begin{align*}
\mathcal{E}_{\vrp}(\vb) \geq \mathcal{E}_{\vr}(\vb) + \sum_{i=1}^m r_i x_i^2 \left(1 - \frac{r_i}{r'_i}\right)\,.
\end{align*}
\end{lem}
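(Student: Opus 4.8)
The plan is to use the variational characterization of electrical energy from Equation~\eqref{eq:energy_max_char}, namely $\mathcal{E}_{\vr}(\vb) = \max_{\vphi} \left( 2\vb^\top\vphi - \sum_i (\AA^\top\vphi)_i^2 / r_i \right)$. Let $\vphi$ be the maximizer for the resistances $\vr$, so that $\mathcal{E}_{\vr}(\vb) = 2\vb^\top\vphi - \sum_i (\AA^\top\vphi)_i^2/r_i$. Since $\mathcal{E}_{\vrp}(\vb)$ is the maximum over all potentials of the analogous expression with $\vr$ replaced by $\vrp$, we may simply plug in this same (possibly suboptimal for $\vrp$) vector $\vphi$ to get the lower bound
\[
\mathcal{E}_{\vrp}(\vb) \geq 2\vb^\top\vphi - \sum_{i=1}^m \frac{(\AA^\top\vphi)_i^2}{r'_i}\,.
\]

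The key step is then to rewrite the right-hand side in terms of $\mathcal{E}_{\vr}(\vb)$ and the flow $\vx$. Subtracting and adding $\sum_i (\AA^\top\vphi)_i^2/r_i$, the right-hand side equals
\[
\left( 2\vb^\top\vphi - \sum_{i=1}^m \frac{(\AA^\top\vphi)_i^2}{r_i} \right) + \sum_{i=1}^m (\AA^\top\vphi)_i^2 \left( \frac{1}{r_i} - \frac{1}{r'_i} \right) = \mathcal{E}_{\vr}(\vb) + \sum_{i=1}^m \frac{(\AA^\top\vphi)_i^2}{r_i}\left(1 - \frac{r_i}{r'_i}\right)\,.
\]
Finally I would invoke Equation~\eqref{eq:flow_pot_equival}, which says $x_i = (\AA^\top\vphi)_i/r_i$, so that $(\AA^\top\vphi)_i^2/r_i = r_i x_i^2$, turning the sum into $\sum_i r_i x_i^2 (1 - r_i/r'_i)$, which is exactly the claimed bound.

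I do not anticipate a genuine obstacle here; the only point requiring a little care is making sure the maximizer $\vphi$ for $\mathcal{E}_{\vr}(\vb)$ is well-defined and that the identification $x_i = (\AA^\top\vphi)_i / r_i$ applies — but both are guaranteed by Lemma~\ref{lem:energy_charact} (with the convention $0/0 = 0$ handling any zero resistances or coordinates where the relevant quantities vanish). One should also note the inequality direction is automatic precisely because $\mathcal{E}_{\vrp}$ is defined as a maximum, so substituting any fixed $\vphi$ can only decrease the value. The whole argument is thus: pick the optimal potentials for $\vr$, use them as a feasible (but not necessarily optimal) choice for $\vrp$, and algebraically regroup using the flow-potential correspondence.
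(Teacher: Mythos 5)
Your proof is correct and follows exactly the paper's own argument: plug the optimal potentials for $\vr$ into the maximization characterization \eqref{eq:energy_max_char} for $\vrp$, regroup the sum, and substitute $x_i = (\AA^\top\vphi)_i/r_i$ from \eqref{eq:flow_pot_equival}. No differences worth noting.
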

\begin{proof}
We use the characterization from Equation~\ref{eq:energy_max_char} for characterizing electrical energy. Let $\vphi$ be the argument that maximizes $(\ref{eq:energy_max_char})$ for resistances $\vr$. We certify a lower bound on 
$\mathcal{E}_{\vrp}(\vb)$ using $\vphi$ as follows:
\begin{align*}
\mathcal{E}_{\vrp}(\vb) 
&\geq 
2 \cdot \vb^\top \vphi - \sum_{i=1}^m \frac{\left(\AA^\top \vphi\right)_i^2}{r'_i}
\\
&= 2 \cdot \vb^\top \vphi - \sum_{i=1}^m \frac{\left(\AA^\top \vphi\right)_i^2}{r_i}
+ \sum_{i=1}^m \frac{\left(  \AA^\top \vphi \right)_i^2 }{r_i} \cdot \left(1 - \frac{r_i}{r'_i}\right)
\\
&= \mathcal{E}_{\vr}(\vb) + \sum_{i=1}^m \frac{\left(\AA^\top \vphi\right)_i^2}{r_i} \cdot \left(1- \frac{r_i}{r'_i}\right)
\,.
\end{align*}
Finally substituting the relation between flows and potentials from Lemma~\ref{lem:energy_charact}, Equation (\ref{eq:flow_pot_equival}), we obtain the desired claim.
\end{proof}

We can derive a similar lower bound on the inverse energy, after increasing conductances.
\begin{lem}\label{lem:energy_inverse_increase}
Let $\vphi = \arg\min_{\phi: \langle \vb, \vphi \rangle = 1} \langle \vc, (\AA^\top \vphi)^2\rangle$. Then one has that
\begin{align*}
\frac{1}{\mathcal{E}_{1/\vcp}(\vb)} \geq \frac{1}{\mathcal{E}_{1/\vc}(\vb)}
+ \frac{1}{\mathcal{E}_{1/\vc}(\vb)^2} 
\cdot \sum_{i=1}^m {c_i} (\AA^\top \vphi)_i^2 \left(1-\frac{c_i}{c'_i}\right)\,.
\end{align*}
\end{lem}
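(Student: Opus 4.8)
The plan is to mirror the proof of Lemma~\ref{lem:lb_inc_en} exactly, but work on the inverse-energy side using the variational characterization in~(\ref{eq:energy_inverse_char}). Recall that
\[
\frac{1}{\mathcal{E}_{1/\vc}(\vb)} = \min_{\vphi : \vb^\top \vphi = 1} \sum_{i=1}^m c_i (\AA^\top \vphi)_i^2\,,
\]
and let $\vphi$ be the minimizer for the conductances $\vc$, as in the statement. Since increasing conductances only increases the objective $\sum_i c_i (\AA^\top\vphi)_i^2$ pointwise, the \emph{same} $\vphi$ remains feasible (it still satisfies $\vb^\top\vphi=1$) for the conductances $\vcp$, so it certifies an upper bound that is a lower bound on $1/\mathcal{E}_{1/\vcp}(\vb)$:
\[
\frac{1}{\mathcal{E}_{1/\vcp}(\vb)} \geq \sum_{i=1}^m c'_i (\AA^\top\vphi)_i^2\,.
\]

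Next I would split the right-hand side in the same telescoping way as before:
\[
\sum_{i=1}^m c'_i (\AA^\top\vphi)_i^2 = \sum_{i=1}^m c_i (\AA^\top\vphi)_i^2 + \sum_{i=1}^m c_i (\AA^\top\vphi)_i^2\left(\frac{c'_i}{c_i}-1\right)\,.
\]
The first sum is exactly $1/\mathcal{E}_{1/\vc}(\vb)$. For the second sum, the claimed inequality has a factor $1/\mathcal{E}_{1/\vc}(\vb)^2$ and the term $\bigl(1-\tfrac{c_i}{c'_i}\bigr)$ rather than $\bigl(\tfrac{c'_i}{c_i}-1\bigr)$, so I need to reconcile these. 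The reconciliation comes from the fact that $\vphi$ here is not normalized to the natural scale: if $\vphi^\star$ is the maximizer in~(\ref{eq:energy_max_char}) then $\vphi = \vphi^\star/(\vb^\top\vphi^\star) = \vphi^\star \cdot \tfrac{1}{\mathcal{E}_{1/\vc}(\vb)}$ after using $\vb^\top\vphi^\star = \mathcal{E}_{1/\vc}(\vb)$ (which follows from~(\ref{eq:flow_pot_equival}) and the optimality conditions). Thus $(\AA^\top\vphi)_i^2 = (\AA^\top\vphi^\star)_i^2 / \mathcal{E}_{1/\vc}(\vb)^2$. Since $\frac{c'_i}{c_i}-1 \geq 1-\frac{c_i}{c'_i}$ whenever $c'_i \geq c_i$ (both equal to $\frac{c'_i-c_i}{c_i}$ resp.\ $\frac{c'_i-c_i}{c'_i}$, and $c'_i\geq c_i$), and all terms are nonnegative, we may replace $\tfrac{c'_i}{c_i}-1$ by the smaller $1-\tfrac{c_i}{c'_i}$, preserving the inequality. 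Substituting both observations gives
\[
\frac{1}{\mathcal{E}_{1/\vcp}(\vb)} \geq \frac{1}{\mathcal{E}_{1/\vc}(\vb)} + \frac{1}{\mathcal{E}_{1/\vc}(\vb)^2}\sum_{i=1}^m c_i (\AA^\top\vphi^\star)_i^2\left(1-\frac{c_i}{c'_i}\right)\,,
\]
which is exactly the claim once we note that the $\vphi$ in the lemma statement is this normalized minimizer, and the paper's convention lets us write $(\AA^\top\vphi)_i^2 \cdot \mathcal{E}_{1/\vc}(\vb)^2$ in place of $(\AA^\top\vphi^\star)_i^2$ — I should double-check which normalization the statement actually intends and align the bookkeeping accordingly.

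The main obstacle I anticipate is precisely this normalization bookkeeping: the statement fixes $\vphi$ as the minimizer of $\langle \vc,(\AA^\top\vphi)^2\rangle$ subject to $\langle\vb,\vphi\rangle=1$, and one must be careful whether the $\frac{1}{\mathcal{E}_{1/\vc}(\vb)^2}$ prefactor in the conclusion is absorbing a scaling of $\vphi$ or not. If the lemma intends $\vphi$ exactly as the constrained minimizer, then $\sum_i c_i(\AA^\top\vphi)_i^2 = 1/\mathcal{E}_{1/\vc}(\vb)$ already, and the prefactor $\frac{1}{\mathcal{E}_{1/\vc}(\vb)^2}$ must be cancelling against an implicit rescaling — most likely the conclusion is meant to be read with $(\AA^\top\vphi)_i$ referring to the \emph{maximizing} potentials of~(\ref{eq:energy_max_char}), which differ by a factor of $\mathcal{E}_{1/\vc}(\vb)$. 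Resolving this ambiguity correctly (and consistently with how the lemma is invoked in the convergence proof) is the only subtle point; the inequality $\frac{c'_i}{c_i}-1 \geq 1-\frac{c_i}{c'_i}$ and the monotonicity of the quadratic form under increasing conductances are both immediate.
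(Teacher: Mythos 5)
Your argument breaks at its very first step, and the break is fatal. By the characterization in~(\ref{eq:energy_inverse_char}), $\frac{1}{\mathcal{E}_{1/\vcp}(\vb)} = \min_{\vphi:\,\vb^\top\vphi=1}\sum_{i=1}^m c'_i(\AA^\top\vphi)_i^2$ is a \emph{minimum} over feasible potentials, so plugging in the old minimizer $\vphi$ (which is still feasible) certifies
\[
\frac{1}{\mathcal{E}_{1/\vcp}(\vb)} \leq \sum_{i=1}^m c'_i(\AA^\top\vphi)_i^2\,,
\]
an \emph{upper} bound — the opposite of the inequality you wrote. This is precisely the asymmetry with Lemma~\ref{lem:lb_inc_en}: there the energy is a \emph{max} over potentials (Equation~(\ref{eq:energy_max_char})), so reusing the old potential legitimately certifies a lower bound on the new energy; here the analogous move goes the wrong way, and no amount of bookkeeping with $\frac{c'_i}{c_i}-1$ versus $1-\frac{c_i}{c'_i}$ can reverse the direction. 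Your own phrase "an upper bound that is a lower bound" is the tell.

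The way around this is to work on the \emph{flow} side rather than the potential side: the old electrical flow $\vx=\arg\min_{\AA\vx=\vb}\langle 1/\vc,\vx^2\rangle$ remains feasible for the new conductances, which gives the upper bound $\mathcal{E}_{1/\vcp}(\vb)\leq\sum_i x_i^2/c'_i = \mathcal{E}_{1/\vc}(\vb)+\sum_i \frac{x_i^2}{c_i}\bigl(\frac{c_i}{c'_i}-1\bigr)$, and an upper bound on energy is what you want, since it converts into a lower bound on inverse energy via the elementary inequality $\frac{1}{x'}\geq\frac{1}{x}+\frac{x-x'}{x^2}$. That linearization of $x\mapsto 1/x$ at $x=\mathcal{E}_{1/\vc}(\vb)$ is where the prefactor $\frac{1}{\mathcal{E}_{1/\vc}(\vb)^2}$ comes from — it is not absorbing a rescaling of $\vphi$, so your proposed reconciliation is aimed at the wrong target. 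One then substitutes $x_i=c_i(\AA^\top\vphi)_i$ from~(\ref{eq:flow_pot_equival}). You are right that the statement's normalization of $\vphi$ is slightly off — the $\vphi$ appearing in the conclusion, and in the way the lemma is invoked in Lemma~\ref{lem:l1-correct-iterations}, is the unnormalized potential $(\AA\mdiag(\vc)\AA^\top)^+\vb$ satisfying $\vb^\top\vphi=\mathcal{E}_{1/\vc}(\vb)$ rather than the constrained minimizer with $\vb^\top\vphi=1$ — but spotting that typo does not repair the flipped inequality at the heart of your argument.
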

\begin{proof}
We use the following basic inequality: for $x,x'> 0$ one has $\frac{1}{x'} \geq \frac{1}{x} + \frac{x-x'}{x^2}$, which follows from $(x - x')^2 \geq 0$.
Also, from the definition of energy in (\ref{eq:energy_def}), we obtain an upper bound on the new energy, after perturbing conductances. Let $\vx = \arg\min_{\AA \vx = \vb} \langle 1/\vc, \vx^2\rangle$, i.e. the electrical flow corresponding to conductances $\vc$. We therefore have:
\begin{align*}
\mathcal{E}_{1/\vcp}(\vb) \leq \sum_{i=1}^m \frac{1}{c'_i} x_i^2
= \sum_{i=1}^m \frac{1}{c_i} x_i^2 + \sum_{i=1}^m \frac{1}{c_i} x_i^2 \cdot \left(\frac{c_i}{c'_i} - 1\right) 
= \mathcal{E}_{1/\vc}(\vb) + \sum_{i=1}^m \frac{1}{c_i}x_i^2 \left(\frac{c_i}{c'_i}-1\right)\,.
\end{align*}
Using the fact that by optimality, $x_i = c_i (\AA^\top \vphi)_i$ (per Lemma~\ref{lem:energy_charact}), and combining with the previous inequality we obtain
\begin{align*}
\frac{1}{\mathcal{E}_{1/\vcp}(\vb)} 
&\geq\frac{1}{\mathcal{E}_{1/\vc}(\vb)} +
 \frac{1}{\mathcal{E}_{1/\vc}(\vb)^2} \cdot
 \left(\mathcal{E}_{1/\vc}(\vb) - \mathcal{E}_{1/\vcp}(\vb)^2\right)
 \\
&\geq \frac{1}{\mathcal{E}_{1/\vc}(\vb)}
+ \frac{1}{\mathcal{E}_{1/\vc}(\vb)^2} 
\cdot \sum_{i=1}^m {c_i} (\AA^\top \vphi)_i^2 \left(1-\frac{c_i}{c'_i}\right)\,,
\end{align*}
which is what we wanted.
\end{proof}

\subsection{Convergence Proof for $\ell_\infty$ Minimization}\label{sec:conv_pf_linf}
Having put together all these tools, we are ready to analyze the algorithms presented in Section~\ref{sec:descriptions}.
We first prove that \textsc{$\ell_\infty$-Minimization} returns a correct infeasibility certificate, whenever it returns on line~\REFlineretthree. This lemma is key to understanding the intuition behind the algorithm.
\begin{lemma}\label{lem:linf_inc_proof}
Whenever \textsc{$\ell_\infty$-Minimization} returns on line~\REFlineretthree, $\vr/\|\vr\|_1$ is a correct approximate infeasibility certificate in the sense that
$$\mathcal{E}_{\vr/\|\vr\|_1}(\vd) \geq (1-\epsilon)^2 M^2\,.$$
\end{lemma}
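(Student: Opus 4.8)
The plan is to exploit the termination condition on line~\REFlineretthree, which is triggered precisely when $\valpha^{(t)} \ne \vones$ throughout every prior iteration \emph{and} the loop guard $\|\vr^{(t)}\|_1 \le 1/\epsilon$ fails, i.e. $\|\vr^{(t)}\|_1 > 1/\epsilon$. Write $\vr = \vr^{(t)}$ for this terminal vector. I would track the energy $\mathcal{E}_{\vr^{(s)}}(\vb)$ across the iterations $s = 0, \dots, t$ and show it grows at least as fast as $M^2$ times the growth of $\|\vr^{(s)}\|_1$, establishing the telescoped inequality $\mathcal{E}_{\vr}(\vb) \ge \mathcal{E}_{\vr^{(0)}}(\vb) + M^2(\|\vr\|_1 - \|\vr^{(0)}\|_1) \ge M^2(\|\vr\|_1 - 1)$, using $\|\vr^{(0)}\|_1 = 1$ and $\mathcal{E}_{\vr^{(0)}}(\vb) \ge 0$. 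Then, using $1$-homogeneity of $\mathcal{E}_{\vr}(\vb)$ in $\vr$, dividing by $\|\vr\|_1$ gives $\mathcal{E}_{\vr/\|\vr\|_1}(\vb) = \mathcal{E}_{\vr}(\vb)/\|\vr\|_1 \ge M^2(1 - 1/\|\vr\|_1) \ge M^2(1-\epsilon)$, which is even stronger than the claimed $(1-\epsilon)^2 M^2$ (or, if I need the stated form exactly, $M^2(1-\epsilon) \ge M^2(1-\epsilon)^2$ holds trivially). Note the statement writes $\mathcal{E}_{\vr/\|\vr\|_1}(\vd)$ but $\vd$ should be read as $\vb$.

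The heart of the argument is the per-iteration energy increase. Fix an iteration $s$ with $\valpha^{(s)} \ne \vones$, let $\vx = \vx^{(s)}$ be the electrical flow and $\vr' = \vr^{(s+1)} = \vr^{(s)} \cdot \valpha^{(s)}$. By Lemma~\ref{lem:lb_inc_en},
\[
\mathcal{E}_{\vr'}(\vb) - \mathcal{E}_{\vr}(\vb) \ge \sum_{i=1}^m r_i x_i^2\left(1 - \frac{r_i}{r_i'}\right) = \sum_{i : \alpha_i^{(s)} \ne 1} r_i x_i^2 \left(1 - \frac{1}{\alpha_i^{(s)}}\right),
\]
since for coordinates with $\alpha_i^{(s)} = 1$ the term vanishes. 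For the active coordinates $\alpha_i^{(s)} = (x_i)^2/M^2$, so $r_i x_i^2 (1 - 1/\alpha_i^{(s)}) = r_i x_i^2 - r_i M^2 = r_i M^2(\alpha_i^{(s)} - 1) = M^2(r_i' - r_i)$. Summing and again using that inactive coordinates contribute nothing to $\|\vr' - \vr\|_1$ either (as $r_i' = r_i$ there), we get $\mathcal{E}_{\vr'}(\vb) - \mathcal{E}_{\vr}(\vb) \ge M^2(\|\vr'\|_1 - \|\vr\|_1)$, i.e. the invariant~(\ref{eq:update_invariant_linfinity}). Telescoping over $s = 0, \dots, t-1$ yields the global bound above.

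The only real subtlety I anticipate is a bookkeeping one: when the algorithm returns on line~\REFlineretthree\ the loop has \emph{just} exited, so I must be careful about which $\vr^{(t)}$ is returned and confirm $\|\vr^{(t)}\|_1 > 1/\epsilon$ at that point — this follows because the guard $\|\vr^{(t)}\|_1 \le 1/\epsilon$ is checked at the top of the loop, the body strictly increases $\|\vr\|_1$ whenever $\valpha^{(t)} \ne \vones$ (which must hold, else we'd have returned on line~\REFlineretone\ instead), and one more pass makes $\|\vr\|_1$ exceed $1/\epsilon$; I should double check whether the returned vector is the one that broke the guard or the previous one, and adjust the $(1-\epsilon)$ versus $(1-\epsilon)^2$ slack accordingly (there is plenty of room). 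A second minor point: the definition of $\alpha_i^{(t)}$ uses the threshold $(1+\epsilon)M$ for deciding activity but scales by $x_i^2/M^2$, so I should verify that on active coordinates $x_i^2/M^2 \ge (1+\epsilon)^2 > 1$, which guarantees a genuine increase and is exactly what makes the telescoping inequality use $M^2$ (not $(1+\epsilon)^2 M^2$) as the rate — the cleaner constant $M^2$ is what I want, and it is precisely what the computation $r_i x_i^2 - r_i M^2 = M^2(r_i' - r_i)$ delivers regardless of the threshold. Everything else is the routine homogeneity-and-telescoping wrap-up sketched above.
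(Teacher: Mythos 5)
Your proposal is correct and follows the paper's own argument essentially verbatim: the same application of Lemma~\ref{lem:lb_inc_en}, the same per-coordinate computation showing the energy gain per unit increase of $\|\vr\|_1$ is exactly $M^2$, the same telescoping from $\|\vr^{(0)}\|_1=1$, and the same normalization by $\|\vr\|_1$ at the end (with the same observation that $(1-\epsilon)\ge(1-\epsilon)^2$ absorbs the slack). No gaps.
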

\begin{proof}
First notice that by Lemma~\ref{lem:linfcertif}, the lower bound on energy is indeed an approximate infeasibility certificate. Now we proceed to prove that throughout the iterations of the algorithm, energy increases at the right rate.

We show that every iteration satisfies the invariant
\begin{equation}
\frac{
\mathcal{E}_{\vr^{(t+1)}}(\vd)
-
\mathcal{E}_{\vr^{(t)}}(\vd)}
{\| \vr^{(t+1)} - \vr^{(t)} \|_1} \geq M^2\,.\label{eq:inc_ratio}
\end{equation}
This is easy to verify using Lemma~\ref{lem:lb_inc_en}, which lower bounds the increase in energy after perturbing resistances. We see that using the perturbation rule defined on line~\REFlinelinfpertrule\, of the algorithm, energy increases as follows
\[
\mathcal{E}_{\vr^{(t+1)}}(\vd) \geq \mathcal{E}_{\vr^{(t)}}+\sum_{i=1}^m  r^{(t)}_i(x_i^{(t)})^2 \cdot \left(1- \frac{1}{\alpha_i^{(t)}}\right)\,.
\]
For every coordinate of $\vr^{(t)}$ that has changed we see that the ratio between the contribution to above lower bound of that specific coordinate, and the increase in resistance is
\[
\frac{  r_i^{(t)}(x_i^{(t)})^2\left(1 - \frac{1}{\alpha_i^{(t)}}\right)}{r_i \left( \alpha_i^{(t)} -1 \right)} = \frac{(x_i^{(t)})^2}{\alpha_i^{(t)}} = M^2\,.
\]
Therefore, summing up over all coordinates we obtain the desired inequality.
Finally, we notice that initially $\mathcal{E}_{\vr^{(0)}}(\vd) \geq 0$, and $\|\vr^{(0)}\|_1 = 1$. So once $\|\vr^{(t)}\|_1 \geq \frac{1}{\epsilon}$, one has that, using (\ref{eq:inc_ratio}),
\begin{align*}
\frac{
\mathcal{E}_{\vr^{(t)}}(\vd)
-
\mathcal{E}_{\vr^{(0)}}(\vd)}
{\| \vr^{(t)} \|_1 - 1} \geq M^2\,,
\end{align*}
and thus
\[
\mathcal{E}_{\vr^{(t)}}(\vd) \geq M^2 (\|\vr^{(t)}\|_1 - 1),
\]
and equivalently:
\[
\mathcal{E}_{\vr^{(t)} / \|\vr^{(t)}\|_1} (\vd) \geq M^2 \left(1  - \frac{1}{ \|\vr^{(t)}\|_1 }\right) \geq M^2 (1-\epsilon)  \,,
\]
which implies what we needed.
\end{proof}

Knowing that the algorithm is correct, we can now proceed and prove that it converges fast (convergence rate can be slightly improved by using a more careful schedule for $M$ and $\epsilon$; we defer this improvement to Section~\ref{sec:phase_improve}). 
\begin{lemma}\label{lem:linf_conv}
The algorithm $\textsc{$\ell_\infty$-Minimization}$ returns a solution after $O(m^{1/3} \log(1/\epsilon) / \epsilon + \log(m/\epsilon)/\epsilon^2)$ iterations.
\end{lemma}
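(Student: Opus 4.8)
The plan is to bound the number of iterations by bounding how much the quantity $\|\vr^{(t)}\|_1$ can grow, against a ceiling coming from the energy upper bound, while separately controlling how many iterations can occur where $\|\vx^{(t)}\|_\infty > m^{1/3} M$ (the ``large'' iterations that are \emph{not} averaged into $\vs$). First I would set up the two regimes. In any iteration that does not return, the perturbation rule multiplies at least one coordinate $r_i^{(t)}$ by $\alpha_i^{(t)} = (x_i^{(t)})^2/M^2 \geq (1+\epsilon)^2$, so the number of times each coordinate can be scaled is $O(\log_{1+\epsilon}(1/(\epsilon m))) = O(\log(1/\epsilon)/\epsilon)$ once we note $\|\vr^{(t)}\|_1$ never exceeds $1/\epsilon$ while the loop runs and each coordinate starts at $1/m$. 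Thus the naive bound is $O(m\log(1/\epsilon)/\epsilon)$; the whole point is to improve the $m$ factor to $m^{1/3}$ by arguing that ``most'' iterations increase $\|\vr\|_1$ by a lot, and the few that don't are bounded by a different argument.

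\textbf{Key steps.} (1) \emph{Large-$\ell_\infty$ iterations are cheap in number.} When $\|\vx^{(t)}\|_\infty^2 \geq m^{2/3} M^2$, the coordinate achieving the max satisfies $\alpha_i^{(t)} = (x_i^{(t)})^2/M^2 \geq m^{2/3}$, so $r_i^{(t)}$ jumps by a factor $\geq m^{2/3}$. Since no coordinate can exceed $\approx 1/\epsilon$ and starts at $1/m$, each coordinate can undergo at most $O(\log(m/\epsilon)/\log(m^{2/3})) = O(1)$ such jumps (for $\epsilon$ not too small; more carefully $O(\log(1/\epsilon)/\log m)$ of them), giving $O(m \cdot \log(1/\epsilon)/\log m)$... which is still too many — so instead I would argue these iterations are bounded via the \emph{energy}: each large iteration, by the invariant~(\ref{eq:inc_ratio}) and the fact that $\|\vr'-\vr\|_1 \geq r_i(\alpha_i - 1) \geq$ the jump in that coordinate, contributes energy increase $\geq M^2 \cdot (\text{jump})$; but more to the point, in a large iteration $\mathcal{E}_{\vr^{(t+1)}} - \mathcal{E}_{\vr^{(t)}} \geq r_i^{(t)}(x_i^{(t)})^2(1 - 1/\alpha_i^{(t)}) \geq \tfrac12 r_i^{(t)} (x_i^{(t)})^2 \geq \tfrac12 \|\vx^{(t)}\|_\infty^2 \cdot r_{\min}$, and since $\mathcal{E}_{\vr^{(t)}}(\vb) \leq \|\vr^{(t)}\|_1 \|\vx^{(t)}\|_\infty^2 \cdot$(ratio) one can convert these into a bound. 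The cleaner route, which I'd pursue: \emph{either} the algorithm returns on line~\REFlinerettwo\ within $O(\log m/\epsilon^2)$ ``small'' iterations by a regret/averaging argument on $\vs^{(t')}$, \emph{or} $t'$ stays small, meaning there are many large iterations, each of which increases $\|\vr\|_1$ multiplicatively on some coordinate by $\geq m^{2/3}$ — and since total $\log$-mass of $\vr$ is $O(m \log(1/(\epsilon m)))$...

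Let me restate the intended decomposition cleanly. (2) \emph{Potential/counting.} Let $L$ be the number of iterations with $\|\vx^{(t)}\|_\infty \geq m^{1/3}M$ and $S$ the number with $\|\vx^{(t)}\|_\infty < m^{1/3}M$. For the $S$ iterations, I would show the averaged vector $\vs^{(t')}/t'$ satisfies $\|\vs^{(t')}/t'\|_\infty \leq (1+\epsilon)M$ once $t' = \Omega(\log m / \epsilon^2)$, using a multiplicative-weights-style argument: the quantity $\Phi^{(t)} = \log \|\vr^{(t)}\|_1$ increases, each small iteration's $\vx^{(t)}$ is a feasible flow of bounded $\ell_\infty$ norm that is ``correlated'' with the current $\vr^{(t)}$ on its large coordinates, and averaging these gives a near-feasible point by the standard width-$m^{1/3}$ regret bound $\sqrt{m^{2/3}\log m / t'}$ — wait, that gives $m^{1/3}\sqrt{\log m/t'} \leq \epsilon M$, i.e. $t' = \Omega(m^{2/3}\log m/\epsilon^2)$, too large. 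So the width must be handled differently: the thresholding keeps $\alpha_i \leq m^{2/3}$ in small iterations, and the relevant regret bound against this multiplicative update gives $t' = O(m^{1/3}\log(1/\epsilon)/\epsilon)$ for the dominant term, matching the theorem. For the $L$ iterations, I'd bound $L = O(m^{1/3}\log(1/\epsilon)/\epsilon)$ directly: each such iteration pumps energy up by a factor related to $m^{1/3}$ against the ceiling $\mathcal{E} \leq M^2/\epsilon$, while energy is at least... hmm, the bound must come from: total multiplicative increase of $\|\vr\|_1$ is capped (it never passes $1/\epsilon$ from $1$), and each large iteration increases $\|\vr\|_1$ additively by $r_i^{(t)}(\alpha_i^{(t)}-1) = r_i^{(t)}(x_i^{(t)})^2/M^2 - r_i^{(t)} = $ (energy contribution)$/M^2 \geq \|\vx^{(t)}\|_\infty^2 r_{\min}/M^2$; combined with $r_{\min} \geq 1/m$ and $\|\vx^{(t)}\|_\infty^2 \geq m^{2/3}M^2$, we get each large iteration adds $\geq m^{-1/3}$ to $\|\vr\|_1$, hence $L \leq (1/\epsilon)/m^{-1/3} = m^{1/3}/\epsilon$; tightening the $\alpha_i \geq (1+\epsilon)^2$ bookkeeping inserts the $\log(1/\epsilon)$ factor, giving $L = O(m^{1/3}\log(1/\epsilon)/\epsilon)$.

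\textbf{Main obstacle.} The delicate part is the $S$-iteration (averaging) bound: one must show that uniformly averaging the bounded-norm electrical flows $\vx^{(t)}$ from the small iterations produces a point of $\ell_\infty$ norm $(1+\epsilon)M$ within the claimed iteration count, and getting the \emph{right} dependence ($m^{1/3}\log(1/\epsilon)/\epsilon$ and not $m^{2/3}$ or $1/\epsilon^2$ with a bad constant) requires carefully coupling the energy-growth invariant~(\ref{eq:inc_ratio}) with the thresholded multiplicative update, rather than a black-box MWU regret bound — essentially showing that the energy reaching its ceiling $M^2/\epsilon$ simultaneously certifies both that $\|\vr\|_1$ can't grow further and that the averaged primal iterate is near-optimal. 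I would expect to need the elementary inequality controlling $1 - 1/\alpha$ versus $\alpha - 1$ in the regime $\alpha \in [(1+\epsilon)^2, m^{2/3}]$ to interpolate between the additive and multiplicative accounting, and the final bound will come out as the max of the two regime bounds, $O(m^{1/3}\log(1/\epsilon)/\epsilon + \log(m/\epsilon)/\epsilon^2)$.
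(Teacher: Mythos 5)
Your decomposition into ``large'' iterations ($\|\vx^{(t)}\|_\infty > m^{1/3}M$, excluded from the running average) and ``small'' ones is exactly the paper's, and your count of the large iterations is correct and essentially identical to the paper's: each one forces some $\alpha_i^{(t)}\geq m^{2/3}$ on a coordinate with $r_i^{(t)}\geq 1/m$, hence adds at least roughly $m^{-1/3}$ to $\|\vr\|_1$, which cannot exceed $1/\epsilon$ while the loop runs, so there are $O(m^{1/3}/\epsilon)$ of them.

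The genuine gap is the bound on the small iterations, which you flag yourself as unresolved: the MWU-style regret bounds you propose either give $m^{2/3}\log m/\epsilon^2$ (as you compute) or are invoked without an actual argument (``the relevant regret bound \ldots gives \ldots matching the theorem''). The paper's argument here is not a regret bound over the simplex, and it does not use the energy-growth invariant (\ref{eq:inc_ratio}) at all --- that invariant is only needed for correctness of the dual certificate (Lemma~\ref{lem:linf_inc_proof}), not for the iteration count. It is a single-coordinate pigeonhole plus a sum-to-product conversion: if the average of $t'$ small iterates still has $\ell_\infty$ norm above $(1+\epsilon)M$, then \emph{one} coordinate $i$ satisfies $\sum_{t\in I}\vert x_i^{(t)}\vert/M\geq (1+\epsilon)t'$, hence $\sum_{t\in I,\,\alpha_i^{(t)}>1}\sqrt{\alpha_i^{(t)}}\geq \epsilon t'$ after discarding the iterations contributing a factor of $1$. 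Each remaining factor lies in $[1+\epsilon,\, m^{1/3}]$, and Lemma~\ref{lem:prod_lb} (a product with prescribed sum and per-term range is minimized at the endpoints of the range) yields $\prod_{t}\sqrt{\alpha_i^{(t)}}\geq\min\{(m^{1/3})^{\epsilon t'/m^{1/3}},\,(1+\epsilon)^{\epsilon t'/(1+\epsilon)}\}$. Since $r_i^{(t)}=\frac{1}{m}\prod_t\alpha_i^{(t)}$ is capped near $1/\epsilon$ by the while-condition, this forces $t'=O(m^{1/3}\log(1/\epsilon)/\epsilon+\log(m/\epsilon)/\epsilon^2)$. That step is the sole source of the $\log(m/\epsilon)/\epsilon^2$ term and is precisely what evades the width penalty you ran into; without it the claimed bound is not established.
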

\begin{proof}
We show that unless the algorithm returns an approximately feasible solution
on lines~\REFlineretone~ or~\REFlinerettwo, then there exists a coordinate $i \in [m]$ for which  $r_i$ increases very fast.

Suppose the algorithm has run for $T$ iterations without returning an approximately feasible solution. Consider the partial sum of iterates obtained so far $\vs^{(t')}$ for some $t' \leq T$. Since the algorithm did not return on line~\REFlineretone, we know that $\|\vs^{(t')}\|_\infty / t' \geq (1+\epsilon) M$. Therefore there exists a coordinate $i \in [m]$ for which $\vs^{(t')}_i \geq (1+\epsilon) M t'$. In other words, letting $I$ be the set of iterates that have contributed to $\vs^{(t')}$, one definitely has that
\[
\sum_{t \in I} \vert \vx^{(t)} \vert \geq t' \cdot (1+\epsilon) M\,,
\]
and thus
\[
\sum_{t \in I} \sqrt{ \alpha_i^{(t)} } \geq t' \cdot (1+\epsilon)\,,
\]
where we used the fact that for each iteration $t\in I$ one has that $\sqrt{\alpha_i^{(t)}} = \vert x_i^{(t)} \vert M$ due to the perturbation rule defined on line~\REFlinelinfpertrule.
This implies that restricting ourselves only to iterations where $\alpha_i$ increased the corresponding resistance $r_i$, we have that
\begin{equation}\label{eq:linf_p1}
\sum_{t \in I, \alpha_i^{(t)} > 1} \sqrt{ \alpha_i^{(t)} } \geq t' \epsilon\,,
\end{equation}
By the condition on line~\REFlinelinfaverageornot\, we see that for all iterations $t\in I$, one has 
\begin{equation}\label{eq:linf_p2}
\sqrt{\alpha_i^{(t)}} \leq {m^{1/3}}\,.
\end{equation}
Also since we only consider the iterations $t\in I$ with $\alpha_i^{(t)} > 1$, the rule from line~\REFlinelinfpertrule\, also enforces that for all these iterations
\begin{equation}\label{eq:linf_p3}
\sqrt{\alpha_i^{(t)}} \geq 1+\epsilon\,.
\end{equation}
Equations (\ref{eq:linf_p1}), (\ref{eq:linf_p2}) and (\ref{eq:linf_p3}) suggest that the product $\prod_{t\in I, \alpha_i^{(t)} > 1} \sqrt{\alpha_i^{(t)}}$ increases very fast: intuitively the worst case should occur either when all the factors contribute equally, either all of them are as small as possible (i.e. $1+\epsilon$, or as large as possible, i.e. $m^{1/3}$). We formalize this intuition in Lemma~\ref{lem:prod_lb}, which implies that
\[
\prod_{t\in I, \alpha_i^{(t)} > 1} \sqrt{\alpha_i^{(t)}} \geq \min \left\{ \left({m^{1/3}}\right)^{\frac{t'\epsilon }{ m^{1/3}}}, (1+\epsilon)^\frac{t' \epsilon}{1+\epsilon} \right\}\,.
\]
Hence setting 
\[
t' \geq 10 \left(\frac{m^{1/3}\log(1/\epsilon) }{\epsilon} + \frac{\log(m/\epsilon)}{\epsilon^2}\right)
\]
suffices to lower bound this product by $\sqrt{m/\epsilon}$.
Since each iteration a resistance ${r}_i^{(t)}$ gets multiplied by the corresponding $\alpha_i^{(t)}$, and all resistances are initially $1/m$, this lower bound implies that $r^{(t)}_i \geq 1/\epsilon$. But this means that the algorithm will finish execution after the current iteration, according to the condition on line~\REFlinewhile.

Finally, we need to upper bound the number of iterations not in $I$; these correspond to those iterations where $\|\vx^{(t)}\|_\infty \geq m^{1/3} \cdot M$, so there exists some index $i$ for which $\alpha^{(t)}_i \geq m^{2/3}$. Therefore some resistance gets multiplied by $m^{2/3}$. Since all resistances are initially $1/m$, in the worst case, each such iteration increases one resistance from $1/m$ to $m^{-1/3}$. Therefore this can happen at most $m^{1/3}\log(1/\epsilon)/\epsilon$ times, before the sum of resistances becomes at least $1/\epsilon$, and the algorithm finishes.

Combining these two cases, we obtain our bound.
\end{proof}

We can prove the convergence bound for $\textsc{$\ell_1$-Minimization}$ similarly. The main difference is that this time we maintain conductances, and the potential function that enables us to prove convergence is $1/\mathcal{E}_{1/\vc}$.

\subsection{Convergence Proof for $\ell_1$-Minimization}\label{sec:defpflone}
\begin{lemma}\label{lem:l1-correct-iterations}
Whenever \textsc{$\ell_1$-Minimization} returns on line~\REFlineloneretthree, $\vc/\|\vc\|_1$ is a correct approximate feasibility certificate in the sense that
\[
\frac{1}{\mathcal{E}_{\|\vc\|_1/\vc}} \geq \frac{1/(1+\epsilon)^2}{M^2}\,.
\]
\end{lemma}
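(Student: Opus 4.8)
The plan is to mirror the proof of Lemma~\ref{lem:linf_inc_proof}, now tracking conductances $\vc$ and the potential $1/\mathcal{E}_{1/\vc}(\vb)$. First a reduction: energy is positively homogeneous in the resistances, so $\mathcal{E}_{\|\vc\|_1/\vc}(\vb)=\|\vc\|_1\,\mathcal{E}_{1/\vc}(\vb)$, and the claimed inequality is exactly $\frac{(\mathcal{E}_{1/\vc}(\vb))^{-1}}{\|\vc\|_1}\geq\frac{1}{(1+\epsilon)^2M^2}$, i.e. $\mathcal{E}_{1/(\vc/\|\vc\|_1)}(\vb)\leq(1+\epsilon)^2M^2$; by Lemma~\ref{lem:l1certif} the renormalized $\vc/\|\vc\|_1\in\Delta_m$ is then the promised approximate feasibility certificate (and yields the returned $\vx$). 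So it suffices to lower bound $\frac{(\mathcal{E}_{1/\vc^{(t)}}(\vb))^{-1}}{\|\vc^{(t)}\|_1}$ at the moment the while loop exits.

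The core step is the per-iteration invariant~(\ref{eq:update_invariant_lone}): each update $\vc^{(t+1)}=\vc^{(t)}\cdot\valpha^{(t)}$ satisfies $\frac{1}{\mathcal{E}_{1/\vc^{(t+1)}}(\vb)}-\frac{1}{\mathcal{E}_{1/\vc^{(t)}}(\vb)}\geq\frac{1}{M^2}\|\vc^{(t+1)}-\vc^{(t)}\|_1$. I would obtain this from Lemma~\ref{lem:energy_inverse_increase} — concretely from the inequality $\frac{1}{\mathcal{E}_{1/\vc^{(t+1)}}(\vb)}\geq\frac{1}{\mathcal{E}_{1/\vc^{(t)}}(\vb)}+\frac{1}{\mathcal{E}_{1/\vc^{(t)}}(\vb)^2}\sum_i\frac{x_i^2}{c_i^{(t)}}\bigl(1-\frac{c_i^{(t)}}{c_i^{(t+1)}}\bigr)$ derived in its proof, with $\vx$ the electrical flow for conductances $\vc^{(t)}$. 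The key bookkeeping is the normalization of $\vphi^{(t)}=(\AA\mdiag(\vc^{(t)})\AA^\top)^+\vb$: one has $\vb^\top\vphi^{(t)}=\mathcal{E}_{1/\vc^{(t)}}(\vb)$ and $x_i=c_i^{(t)}(\AA^\top\vphi^{(t)})_i$, hence $\frac{x_i^2}{c_i^{(t)}}=c_i^{(t)}(\AA^\top\vphi^{(t)})_i^2$, while on each coordinate with $\alpha_i^{(t)}>1$ the rule of line~\REFlinelonepertrule\ says $\alpha_i^{(t)}=\bigl((\AA^\top\vphi^{(t)})_i/\vb^\top\vphi^{(t)}\bigr)^2M^2$, i.e. $\frac{x_i^2}{c_i^{(t)}}=\frac{\alpha_i^{(t)}\,\mathcal{E}_{1/\vc^{(t)}}(\vb)^2}{M^2}\,c_i^{(t)}$. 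Plugging this in, and using $1-c_i^{(t)}/c_i^{(t+1)}=1-1/\alpha_i^{(t)}$, coordinate $i$'s contribution to the energy increase is exactly $\frac{1}{M^2}c_i^{(t)}(\alpha_i^{(t)}-1)$, which is $\frac{1}{M^2}$ times its contribution $c_i^{(t+1)}-c_i^{(t)}=c_i^{(t)}(\alpha_i^{(t)}-1)$ to $\|\vc^{(t+1)}\|_1-\|\vc^{(t)}\|_1$. Summing over the changed coordinates and using that conductances only increase (so $\|\vc^{(t+1)}-\vc^{(t)}\|_1=\|\vc^{(t+1)}\|_1-\|\vc^{(t)}\|_1$) proves the invariant.

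Finally I would telescope and apply the stopping rule. Since $\vc^{(0)}=\vones/m$ gives $\|\vc^{(0)}\|_1=1$ and $\mathcal{E}_{1/\vc^{(0)}}(\vb)>0$, and since every iteration executed before the algorithm reaches line~\REFlineloneretthree\ is a genuine conductance update (the blue return on line~\REFlineloneretone\ and the return on line~\REFlinelonerettwo\ would have halted the algorithm), summing the invariant yields $\frac{1}{\mathcal{E}_{1/\vc^{(t)}}(\vb)}\geq\frac{1}{M^2}\bigl(\|\vc^{(t)}\|_1-1\bigr)$. Returning on line~\REFlineloneretthree\ means the while-guard failed, i.e. $\|\vc^{(t)}\|_1>1+\frac{1}{(1+\epsilon)^2-1}=\frac{(1+\epsilon)^2}{(1+\epsilon)^2-1}$, and since $x\mapsto\frac{x-1}{x}$ is increasing this forces $\frac{\|\vc^{(t)}\|_1-1}{\|\vc^{(t)}\|_1}>\frac{1}{(1+\epsilon)^2}$. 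Dividing the telescoped bound by $\|\vc^{(t)}\|_1$ and using $\mathcal{E}_{\|\vc^{(t)}\|_1/\vc^{(t)}}(\vb)=\|\vc^{(t)}\|_1\,\mathcal{E}_{1/\vc^{(t)}}(\vb)$ gives $\frac{1}{\mathcal{E}_{\|\vc^{(t)}\|_1/\vc^{(t)}}(\vb)}>\frac{1/(1+\epsilon)^2}{M^2}$, as required. I expect the only delicate point to be the normalization bookkeeping in the middle step — ensuring the powers of $\vb^\top\vphi^{(t)}=\mathcal{E}_{1/\vc^{(t)}}(\vb)$ cancel so that the per-coordinate ratio is precisely $1/M^2$, with no residual dependence on how $\AA$ is conditioned; everything else is routine telescoping together with the elementary monotonicity of $x\mapsto(x-1)/x$.
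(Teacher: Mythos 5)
Your proposal is correct and follows essentially the same route as the paper's proof: the same potential function $1/\mathcal{E}_{1/\vc}$, the same per-iteration invariant derived from Lemma~\ref{lem:energy_inverse_increase} with the identical per-coordinate ratio computation using $\vb^\top\vphi^{(t)}=\mathcal{E}_{1/\vc^{(t)}}(\vb)$, and the same telescoping against the stopping threshold $\|\vc\|_1 > 1+\frac{1}{(1+\epsilon)^2-1}$. The normalization bookkeeping you flag as delicate checks out exactly as in the paper.
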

\begin{proof}
By Lemma~\ref{lem:l1certif}, this also yields a solution $\vx$ such that $\AA \vx = \vb$ and $\|\vx\|_1 \leq \sqrt{\mathcal{E}_{\|\vx\|_1 / \vc}} \leq M(1+\epsilon)$.

In order to prove that at the end of the execution the $\ell_1$ norm of this solution is small enough, this time we track as potential function the inverse energy $1/\mathcal{E}_{1/\vc}$. More precisely, we show that every iteration satisfies the invariant
\begin{equation}\label{eq:l1_inc_ratio}
\frac{ \frac{1}{\mathcal{E}_{\vc^{(t+1)}}(\vd)} - \frac{1}{\mathcal{E}_{\vc^{(t)}}(\vd)} }{ \|\vc^{(t+1)} - \vc^{(t)}\|_1 } \geq \frac{1}{M^2}\,.
\end{equation}
This is easy to verify using Lemma~\ref{lem:energy_inverse_increase}, which lower bounds the increase in inverse energy after perturbing conductances. We see that using the perturbation rule defined on line~\REFlinelonepertrule\, of the algorithm, inverse energy increases as follows
\begin{align*}
\frac{1}{\mathcal{E}_{1/\vc^{(t+1)}}(\vd)} 
\geq
\frac{1}{ \mathcal{E}_{1/\vc^{(t)}}}
+
\frac{1}{ \mathcal{E}_{1/\vc^{(t)}}^2} \cdot
\sum_{i=1}^m  {c_i^{(t)}} (\AA^\top \vphi^{(t)})_i^2 \cdot \left(1- \frac{1}{\alpha_i^{(t)}}\right)\,.
\end{align*}
For every coordinate of $\vc^{(t)}$ that has changed we see that the ratio between the contribution to above lower bound of that specific coordinate, and the increase in conductance is
\begin{align*}
\frac{1}{ \mathcal{E}_{1/\vc^{(t)}}^2 } \cdot \frac{  {c_i^{(t)}}(\AA^\top \vphi^{(t)})_i^2\left(1 - \frac{1}{\alpha_i^{(t)}}\right)}{c_i \left( \alpha_i^{(t)} -1 \right)} 
= 
\frac{(\AA^\top \vphi^{(t)})_i^2}{ \mathcal{E}_{1/\vc^{(t)}}^2 } \cdot 
\frac{1}{\alpha_i^{(t)}} 
= 
\left(\frac{(\AA^\top \vphi^{(t)})_i}{\vb^\top \vphi^{(t)} }\right)^2 \cdot 
\frac{1}{\alpha_i^{(t)}} 
=
\frac{1}{M^2}\,,
\end{align*}
where we used the fact that $\vb^\top \vphi^{(t)} = \mathcal{E}_{1/\vc^{(t)}}$ (Lemma~\ref{lem:energy_charact}).

Therefore, summing up over all coordinates we obtain the desired inequality. Since $\mathcal{E}_{1/\vc^{(0)}}\geq 0$ and $\|\vc^{(0)}\|_1 = 1$, we know that once $\|\vc^{(t)}\|_1 \geq 1+\frac{1}{(1+\epsilon)^2 - 1} = O( \frac{1}{\epsilon} )$, one has that, using (\ref{eq:l1_inc_ratio}),
\[
\frac{  \frac{1}{\mathcal{E}_{1/\vc^{(t)}}(\vd)}-\frac{1}{\mathcal{E}_{1/\vc^{(0)}}(\vd)} }{\|\vc^{(t)}\|_1-1} \geq \frac{1}{M^2}
\,
\]
and thus,
\[
\frac{1}{\mathcal{E}_{1/\vc^{(t)}}(\vd)} 
\geq \frac{\|\vc^{(t)}\|_1-1}{M^2},
\]
and equivalently:
\[
 \,\textnormal{}\\
\mathcal{E}_{\|\vc^{(t)}\|_1/\vc^{(t)}}(\vd) 
= \mathcal{E}(1/\vc^{(t)})(\vd)  \cdot \|\vc^{(t)}\|_1 
\leq M^2 \cdot \frac{ \|\vc^{(t)}\|_1}{\|\vc^{(t)}\|_1 - 1} \leq M^2 (1+\epsilon)^2\,,
\]
which is what we needed.
\end{proof}

Next we prove that the algorithm converges fast. Convergence rate can be slightly improved by using a more careful schedule for $M$ and $\epsilon$, which we defer to Section~\ref{sec:phase_improve}.
\begin{lemma}\label{lem:lone_conv_fast} The algorithm \textsc{$\ell_1$-Minimization} returns a solution after $O(m^{1/3} \log (1/\epsilon)/\epsilon + \log(m/\epsilon)/\epsilon^2)$ iterations.
\end{lemma}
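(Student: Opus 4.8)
The plan is to mirror the proof of Lemma~\ref{lem:linf_conv} almost verbatim, translating every step into the language of conductances and the inverse-energy potential. I would start by assuming the algorithm has run for $T$ iterations without returning an approximate infeasibility certificate on lines~\REFlineloneretone\ or~\REFlinelonerettwo. For any $t' \le T$ associated to a partial sum $\vs^{(t')}$, the failure to return on line~\REFlineloneretone\ means $\|\vs^{(t')}\|_\infty / t' \ge \tfrac{1}{(1-\epsilon)M}$, so some coordinate $i$ satisfies $\vs^{(t')}_i \ge \tfrac{t'}{(1-\epsilon)M}$. Writing $I$ for the set of iterates contributing to $\vs^{(t')}$, this says $\sum_{t\in I} \bigl|(\AA^\top\vphi^{(t)})_i / (\vb^\top\vphi^{(t)})\bigr| \ge \tfrac{t'}{(1-\epsilon)M}$. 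By the perturbation rule on line~\REFlinelonepertrule, for each such iteration $\sqrt{\alpha_i^{(t)}} = \bigl|(\AA^\top\vphi^{(t)})_i/(\vb^\top\vphi^{(t)})\bigr| \cdot M$, so we obtain $\sum_{t\in I}\sqrt{\alpha_i^{(t)}} \ge \tfrac{t'}{1-\epsilon} \ge t'(1+\epsilon)$. Dropping the iterations where $\alpha_i^{(t)}=1$ costs at most $t'$ from the sum, giving $\sum_{t\in I,\,\alpha_i^{(t)}>1}\sqrt{\alpha_i^{(t)}} \ge t'\epsilon$ (up to the usual slack), which is exactly the analogue of~(\ref{eq:linf_p1}).

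Next I would establish the two-sided bounds on each surviving factor. The condition on line~\REFlinelinfpertrule\ for this algorithm (the $\textcolor{blue}{\IF}$ guarding the update of $\vs$) guarantees $\bigl|(\AA^\top\vphi^{(t)})_i/(\vb^\top\vphi^{(t)})\bigr| \le m^{1/3}/M$ for every $t\in I$, hence $\sqrt{\alpha_i^{(t)}} \le m^{1/3}$, the analogue of~(\ref{eq:linf_p2}). And restricting to $\alpha_i^{(t)}>1$, the thresholding rule forces $\sqrt{\alpha_i^{(t)}} \ge \tfrac{1}{(1-\epsilon)} \ge 1+\epsilon$, the analogue of~(\ref{eq:linf_p3}). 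With these three facts in hand I would invoke Lemma~\ref{lem:prod_lb} exactly as before to conclude
\[
\prod_{t\in I,\,\alpha_i^{(t)}>1}\sqrt{\alpha_i^{(t)}} \ge \min\left\{ \bigl(m^{1/3}\bigr)^{\frac{t'\epsilon}{m^{1/3}}},\ (1+\epsilon)^{\frac{t'\epsilon}{1+\epsilon}} \right\}\,,
\]
so that taking $t' \ge 10\bigl( m^{1/3}\log(1/\epsilon)/\epsilon + \log(m/\epsilon)/\epsilon^2 \bigr)$ makes this product at least $\sqrt{m/\epsilon}$. Since $c_i^{(t)}$ is the product of the $\alpha_i^{(t)}$'s and starts at $1/m$, this forces $c_i^{(t)} \ge 1/\epsilon$, so $\|\vc^{(t)}\|_1$ exceeds the while-loop threshold $1 + \tfrac{1}{(1+\epsilon)^2-1} = O(1/\epsilon)$ and the algorithm terminates on line~\REFlineloneretthree.

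Finally I would bound the iterations \emph{not} in $I$, i.e. those where the $\textcolor{blue}{\IF}$ on the line before~\REFlinelonepertrule\ fails: there $\|\AA^\top\vphi^{(t)}/(\vb^\top\vphi^{(t)})\|_\infty \ge m^{1/3}/M$, so some coordinate has $\alpha_i^{(t)} \ge m^{2/3}$ and the corresponding conductance gets multiplied by $m^{2/3}$; starting from $1/m$, each such event raises a conductance to at least $m^{-1/3}$, so after $O(m^{1/3}\log(1/\epsilon)/\epsilon)$ of them $\|\vc\|_1$ crosses the threshold. Adding the two cases gives the claimed $O(m^{1/3}\log(1/\epsilon)/\epsilon + \log(m/\epsilon)/\epsilon^2)$ bound. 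The only slightly delicate point, and the place I would be most careful, is tracking the $(1-\epsilon)$ versus $(1+\epsilon)$ factors and the additive $t'$ slack when passing from $\sum\sqrt{\alpha_i^{(t)}} \ge t'(1+\epsilon)$ to the restricted sum $\ge t'\epsilon$, since here the threshold $\tfrac{1}{(1-\epsilon)M}$ appears rather than the $(1+\epsilon)M$ of the $\ell_\infty$ case; but for $\epsilon$ bounded away from $1$ this is the same manipulation and changes nothing in the asymptotics. Everything else is a direct transcription of the $\ell_\infty$ argument with $r_i \leftrightarrow c_i$, energy $\leftrightarrow$ inverse energy, and Lemma~\ref{lem:lb_inc_en} $\leftrightarrow$ Lemma~\ref{lem:energy_inverse_increase}.
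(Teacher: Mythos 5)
Your proposal is correct and follows essentially the same route as the paper's proof: the same coordinate-selection argument from the failure to return an infeasibility certificate, the same two-sided bounds on $\sqrt{\alpha_i^{(t)}}$ feeding into Lemma~\ref{lem:prod_lb}, and the same counting of the iterations outside $I$ via the $m^{2/3}$ multiplicative jump in a conductance. The only discrepancies are cosmetic (a line-reference slip pointing at the $\ell_\infty$ algorithm's line numbers, and targeting $\sqrt{m/\epsilon}$ rather than $\sqrt{m\,(1+\tfrac{1}{(1+\epsilon)^2-1})}$, which is the slightly larger and hence sufficient threshold), and neither affects the argument.
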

\begin{proof}
The proof follows the lines of the proof we used for Lemma~\ref{lem:linf_conv}:
unless the algorithm returns an approximate infeasibility certificate on lines~\REFlineloneretone\, or~\REFlinelonerettwo, then there exists a coordinate $i\in[m]$ for which $c_i$ increases very fast.

Suppose the algorithm has run for $T$ iterations without returning an approximate infeasibility certificate.
Consider the partial sum of iterates obtained so far $\vs^{(t')}$ for some $t'\leq T$. Since the algorithm did not return on line~\REFlineloneretone, we know that $\|\vs^{(t')}\|_\infty /t' \geq \frac{1}{(1-\epsilon)M}$, therefore there exists a coordinate $i\in[m]$ for which $s^{(t')}_i \geq t' \cdot \frac{1}{(1-\epsilon) M}$. In other words, letting $I$ be the set of iterates that contributed to $\vs^{(t')}$, one has that
\[
s^{(t')}_i
= \sum_{t\in I} \left\vert 
\frac{
(\AA^\top \vphi^{(t)})_i
}
{
\langle \vb, \vphi^{(t)} \rangle
} 
\right\vert 
\geq t'\cdot \frac{1}{(1-\epsilon) M}
\]
and  thus, since by definition $\alpha_i^{(t)} = \left( \frac{ (\AA^\top \vphi^{(t)})_i }{\vb^\top \vphi^{(t)}} \right)^2 \cdot M^2$,
\[
\sum_{t\in I}\sqrt{\alpha_i^{(t)}} \geq t' \cdot \frac{1}{1-\epsilon}\,.
\]
Therefore, restricting ourselves only to iterations where $\alpha_i$ increased the corresponding conductance $c_i$, we have that
\begin{equation}\label{eq:blah1}
\sum_{t\in I, \alpha_i^{(t)} > 1} \sqrt{\alpha_i^{(t)}} \geq t' \cdot \left(\frac{1}{1-\epsilon} - 1\right) \geq t' \cdot \epsilon\,.
\end{equation} 
By the condition on line~\REFlineloneaverageornot\, we see that for all iterations $t \in I$, one has 
\begin{equation}\label{eq:blah2}
\sqrt{\alpha_i^{(t)}} \leq m^{1/3}\,.
\end{equation}
So considering only the iterations $t \in I$ with $\alpha_i^{(t)} > 1$, the rule from line~\REFlinelonepertrule\, also enforces that for all these iterations 
\begin{equation}\label{eq:blah3}
\sqrt{\alpha_i^{(t)}} \geq \frac{1}{1-\epsilon}\,.
\end{equation}
Combining Equations (\ref{eq:blah1}), (\ref{eq:blah2}), and (\ref{eq:blah3}), and applying Lemma~\ref{lem:prod_lb}, exactly the same way we did in the proof of Lemma~\ref{lem:linf_conv} implies that
\[
\prod_{t \in I, \alpha_i^{(t)} > 1} \sqrt{\alpha_i^{(t)}} \geq 
\min\left\{
\left(m^{1/3}\right)^{\frac{t'\epsilon}{m^{1/3}}},
\left( \frac{1}{1-\epsilon} \right)^{\frac{t'\epsilon}{1/(1-\epsilon)}}
\right\}
\]
So if 
\[
t' \geq 10 \left(\frac{m^{1/3}\log(1/\epsilon) }{\epsilon} + \frac{\log(m/\epsilon)}{\epsilon^2}\right)
\]
once again we have that this product is lower bounded by $\sqrt{m \cdot \left(1 + \frac{1}{(1+\epsilon)^2 - 1}\right)}$, therefore the corresponding conductance $c_i^{(T)} \geq 1 + \frac{1}{(1+\epsilon)^2 - 1}$, since its initial value was $1/m$.
Since we can only control the total number of iterations $T$, we can lower bound $t'$ by showing that the number of iterations not in $I$ can not be too large. Just as before, we lower bound the number of iterations where $\left\|\frac{\AA^\top \vphi^{(t)}}{\vb^\top \vphi^{(t)}}  \right\|_\infty \geq m^{1/3} / M$. Note that whenever this happens, there exists an index $i$ for which $\alpha_i^{(t)} \geq m^{2/3}$. Therefore some conductance gets multiplied by $m^{2/3}$. Again, using an identical argument to the one from the proof of Lemma~\ref{lem:linf_conv}, we see that this can not happen more than $O(m^{1/3} \log(1/\epsilon) /\epsilon)$ times. Combining this with the sufficient number of iterations required by the other case, we obtain our bound.
\end{proof}


\section{Using Phases to Improve the Iteration Count}\label{sec:phase_improve}
In this section, we show that via minor modifications to our algorithms, we can improve the number of iterations to 
$
O\left(\frac{m^{1/3} \log(1/\epsilon)}{\epsilon^{2/3}} + \frac{\log m}{\epsilon^2}\right)
$
thus obtaining the bound promised by Theorem~\ref{thm:main_theorem}.
This relies on the observation that the entire difficulty of the problem is concentrated on improving the quality of a solution from $(1-2\epsilon)M$ to $(1-\epsilon)M$. For conciseness, let us focus on the $\ell_\infty$ case, and consider the convergence argument described in Sections~\ref{sec:conv_pf_linf}. Our goal there is to increase the sum of resistances to $1/\epsilon$, since our argument assumes that the initial energy could be arbitrarily small.

 However, if we assume that we warm start the method with a set of resistances $\vr_0$, $\|\vr_0\|_1 = 1$, for which the corresponding energy is already large enough, $\mathcal{E}_{\vr_0} \geq (1-2\epsilon)^2 M^2$, we only need to iterate until we obtain a set of resistances $\vr$ such that $\|\vr\|_1 = 3$ (rather than $1/\epsilon$) in order to certify that the current energy/resistance ratio is as large as desired, i.e. $\mathcal{E}_{\vr}/ \|\vr\|_1 \geq (1-\epsilon)^2 M^2$. This in turn improves the number of iterations the algorithm needs before it returns. We expand these ideas in what follows.
 
Now, suppose we have a set of resistances $\vr_0$, such that $\|\vr_0\|_1 = 1$ and $\mathcal{E}_{\vr_0} \geq (1-2\epsilon)^2 M^2$. Let us analyze the number of iterations of the method described in Section~\ref{sec:conv_pf_linf} that we require before we can return $\vr$ such that $\mathcal{E}_{\vr} / \|\vr\|_1 \geq (1-\epsilon)^2 M^2$ or a solution $\vx$ such that $\|\vx\|_\infty \leq (1+\epsilon) M$.

First, we claim that if each update satisfies the invariant from Equation (\ref{eq:inc_ratio}), then we can stop iteration once $\|\vr\|_1 = 3$. Indeed, in this case, one has that
\begin{align*}
\frac{\mathcal{E}_{\vr}}{\|\vr\|_1} &= \frac{\mathcal{E}_{\vr_0} + \left(\mathcal{E}_{\vr}-\mathcal{E}_{\vr_0}\right)}{\|\vr_0\|_1 + \|\vr-\vr_0\|_1}\\
&\geq \frac{(1-2\epsilon)^2 M^2 + \|\vr-\vr_0\|_1 M^2}{1 + \|\vr-\vr_0\|_1} \\
&\geq M^2 \left(1 - \frac{4\epsilon}{1 + \|\vr-\vr_0\|_1}\right) \\
&\geq M^2 (1-\epsilon)^2\,,
\end{align*}
whenever $\|\vr-\vr_0\|_1 \geq 3$.

The remaining analysis is carried over almost identically, except that the threshold set on line~\REFlinelinfaverageornot\, is changed to $\rho = \epsilon m$, and our goal is to get $\prod_{t\in I, \alpha_i^{(t)}>1}\alpha_i \geq \sqrt{3 m}$.

For the iterations that satisfy this threshold, by applying Lemma~\ref{lem:prod_lb} we see that it is sufficient to witness a small number $t'$ of such iterations such that 
\[
\min \left\{ {\rho}^{\frac{t'\epsilon }{ \rho}}, (1+\epsilon)^\frac{t' \epsilon}{1+\epsilon} \right\} \geq \sqrt{3m}\,,
\]
so $t' = \Theta\left(\frac{\rho}{\epsilon}\cdot \frac{\log m}{\log \rho} + \frac{\log m}{\epsilon^2}\right)$ suffices.

For the iterations that do not satisfy the threshold, in the worst case, each of them increases one resistance from $1/m$ to $\rho^2 / m$ so this can happen at most $O(m/\rho^2)$ times.

Setting $\rho = \left(\epsilon m\right)^{1/3}$ we get that the total number of iterations is at most $O\left( \frac{m^{1/3}}{\epsilon^{2/3}} \log(1/\epsilon) + \frac{\log m}{\epsilon^2} \right)$.

All of this holds assuming we have a good warm start for resistances. We obtain it by recursively invoking the same method for target value $(1-1.75 \epsilon)^2 M^2$, and $.25 \epsilon$ accuracy. In case of failure, this returns a vector $\vx$ which certainly satisfies $\|\vx\|_\infty \leq M$, so this concludes the entire run on the algorithm; otherwise, it returns a certificate consisting of resistances for which the ratio between the corresponding energy and $\ell_1$ norm is at least $(1-2\epsilon)^2 M^2$, so they can be used as a warm start.

Recursion ends once $\epsilon \geq 1/2$. We note that since the desired accuracy gets increased by a constant factor after each level of recursion, the total number of iterations is dominated by those performed at the top level (i.e. for the lowest $\epsilon$).  Hence our result. 

Note that this method can also be implemented slightly more naturally by running Algorithm~\ref{fig:linf} with a varying schedule for $M$ and $\epsilon$. 

Improving the number of iterations for $\ell_1$ minimization is done analogously.

\section{From Approximate Decision to Approximate Optimization}\label{sec:appx_opt}
Our algorithms are designed to solve an approximate decision problem, given a guess for the value of the optimal solution. While this follows from a standard reduction, for the sake of completeness we prove here that this is sufficient to optimize the problem approximately without paying more than an additional constant overhead in running time.

To be more specific, let us first focus on $\ell_\infty$ minimization. Theorem~\ref{thm:main_theorem} states that given a guess $M$ and accuracy $\epsilon$, the algorithm either returns an approximately feasible solution with value $\|\vx\|_\infty \leq (1+\epsilon)M$, or an infeasibility certificate certifying that $\|\vx^*\|_\infty \geq (1-\epsilon)M$. Hence this restricts the search interval for the true value either within the interval $[0, (1+\epsilon)M]$ or $[(1-\epsilon)M, \infty)$.

We initialize our search interval to $[\|\vx_0\|_2/\sqrt{m}, \|\vx_0\|_\infty]$ where $\vx_0$ is the initial iterate obtained with uniform resistances. Using Lemma~\ref{lem:linfcertif} we easily verify that $\|\vx_0\|_2/\sqrt{m}$ is indeed a lower bound on $\|\vx^*\|_\infty$, since energy lower bounds the squared optimal value.

Given a search interval $[L, U]$, we let $M = \sqrt{LU}$, $\tilde{\epsilon} = \min\left\{\frac{1}{2}, \left(\frac{U}{L}\right)^{1/6}-1 \right\}$. We invoke Theorem~\ref{thm:main_theorem} for target value $M$ and accuracy $\tilde{\epsilon}$. Depending on the outcome we update the search interval to $[L, (1+\tilde{\epsilon})M]$ or $[(1-\tilde{\epsilon})M, U]$.

 When $U/L \leq 1+\epsilon/4$ we stop the search, call the algorithm for target value $U(1+\epsilon/4)$ and accuracy $\frac{\epsilon/4}{1+\epsilon/4}$, then output the approximately feasible iterate returned by the algorithm. The fact that this call indeed returns an approximately feasible iterate follows from the fact that $U$ is certainly feasible, since this is an invariant maintained by our search, and that if the algorithm were to return an infeasibility certificate it must have needed that $U(1+\epsilon/4)(1-\frac{\epsilon/4}{1+\epsilon/4}) < U$, which is false. Thus we know that the returned solution has value at most $U(1+\epsilon/4)(1+\frac{\epsilon/4}{1+\epsilon/4}) \leq L(1+\epsilon)$, so it satisfies the desired approximation guarantee.

Finally, we analyze the cost of the search. Note that each iteration of the search reduces $\log U - \log L$ be a constant factor, and it stops whenever it becomes at most $\log(1+\epsilon/4) = \Theta(\epsilon)$.
For as long as $U/L > (3/2)^6$, the algorithm is invoked with accuracy $1/2$, and $\log U - \log L$ gets reduced by a constant factor, so this happens at most $O(\log \log m)$ times. Once $U/L$ becomes small enough, i.e. $\log U - \log L < 6 \log(3/2)$, we use accuracy $\exp((\log U-\log L)/6)-1 = \Theta(\log(U/L))$. Note that from Theorem~\ref{thm:main_theorem} we know that the number of iterations of the algorithm for a single invocation depends on $1/\tilde{\epsilon}^c$, where $c$ is a fixed constant; due to our schedule for choosing $\tilde{\epsilon}$, the total cost of this sequence of invocations is dominated by the cost of the final one, where $\tilde{\epsilon} = \Theta(\epsilon)$.

So letting $\mathcal{T}(\epsilon)$ be the number of iterations required by the algorithm from Theorem~\ref{thm:main_theorem} to solve the approximate decision problem to accuracy $\epsilon$, we have that solving the approximate optimization problem requires $O\left(\mathcal{T}(1/2) \log \log m + \mathcal{T}(\epsilon)\right)$ iterations. 

The $\ell_1$ minimization problem is treated similarly, so we omit its description.


\newcommand{\rulesep}{}

\begin{figure*}[h!]
\subfigure[][$\log(\textnormal{iterations})$ for $\ell_\infty$\\ minimization as function of $\log(\epsilon^{-1})$.]{\label{subfig:a}
\begin{minipage}[t]{0.236\textwidth}
	\centering
	\includegraphics[width=\textwidth]{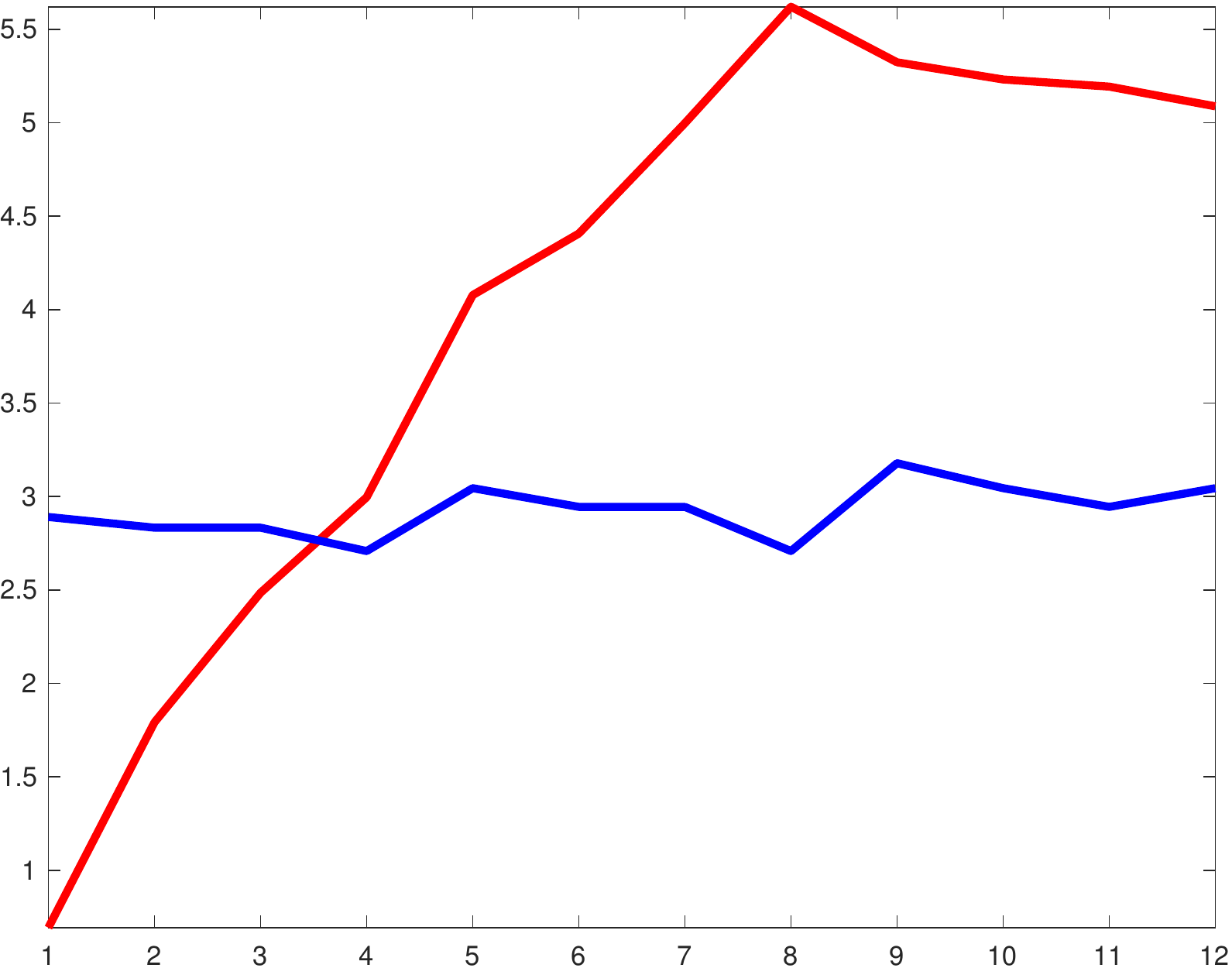}
\end{minipage}
}\rulesep
\subfigure[][Time (sec) for $\ell_\infty$\\ minimization as function of $\log(\epsilon^{-1})$.]{\label{subfig:b}
\begin{minipage}[t]{0.236\textwidth}
	\centering
	\includegraphics[width=\textwidth]{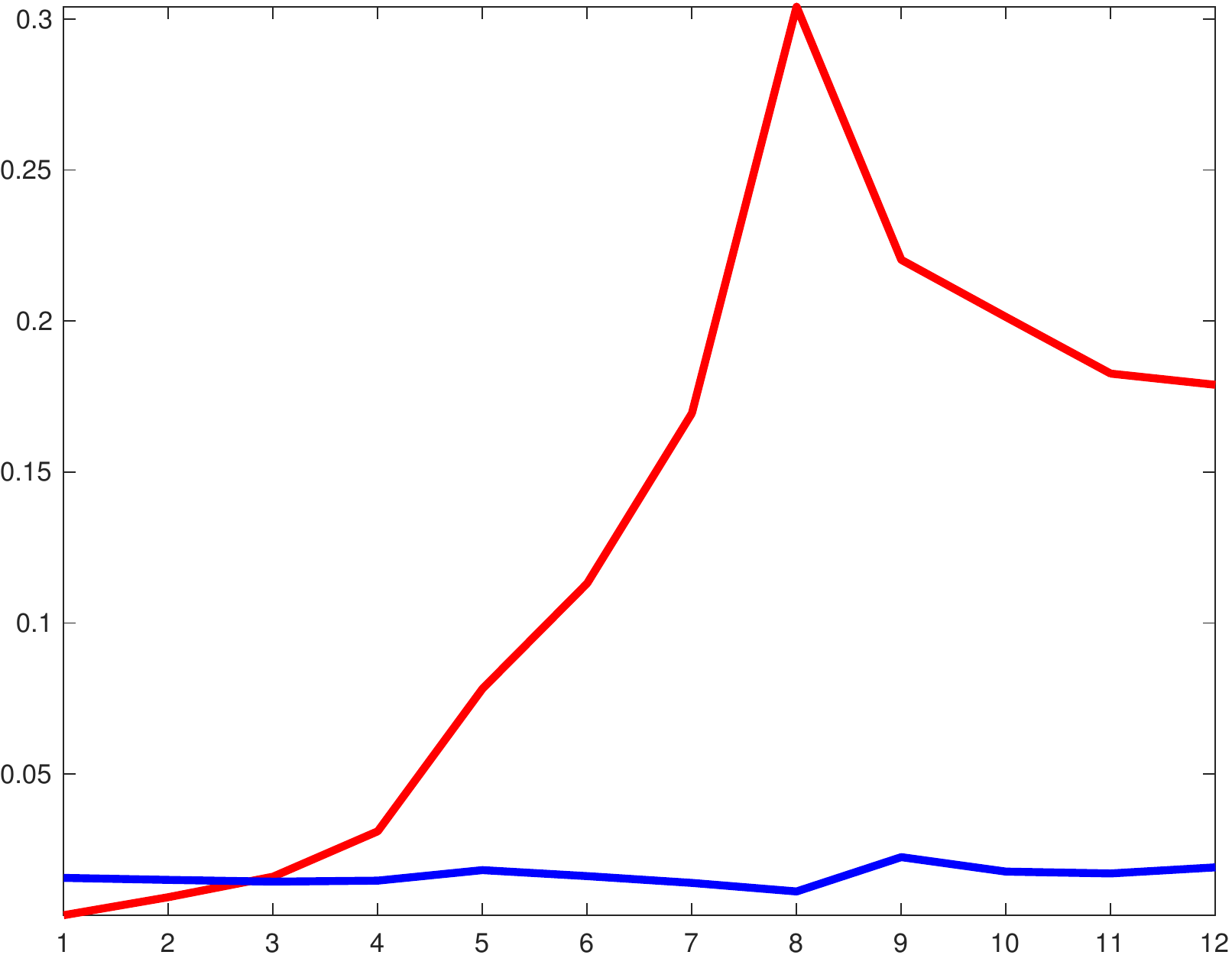}
\end{minipage}
}\rulesep
\subfigure[][Number of iterations for $\ell_\infty$\\ minimization as function of $m/200$.]{\label{subfig:c}
\begin{minipage}[t]{0.236\textwidth}
	\centering
	\includegraphics[width=\textwidth]{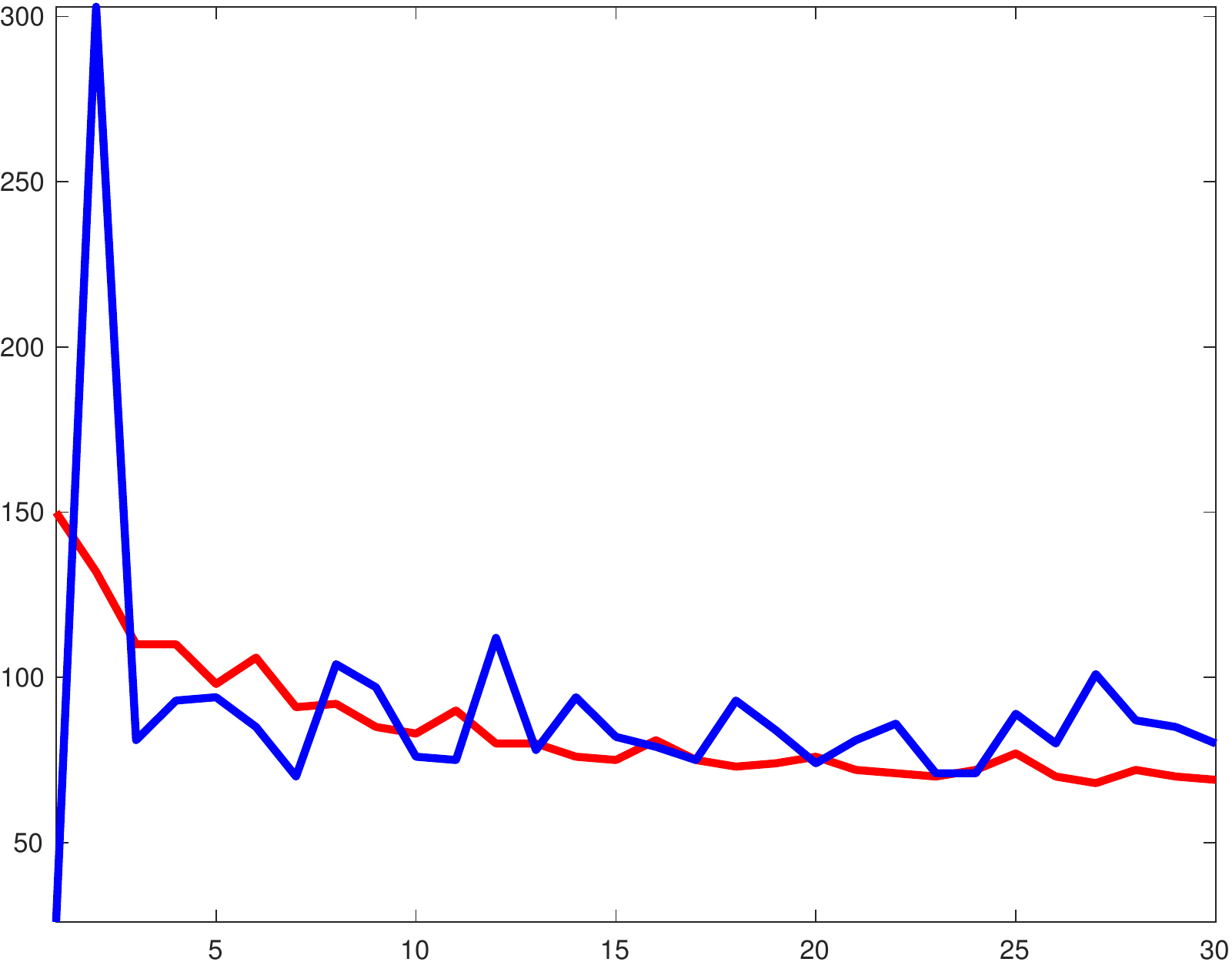}
\end{minipage}
}\rulesep
\subfigure[][Time (sec) for $\ell_\infty$\\ minimization as function of $m/200$.]{\label{subfig:d}
\begin{minipage}[t]{0.236\textwidth}
	\centering
	\includegraphics[width=\textwidth]{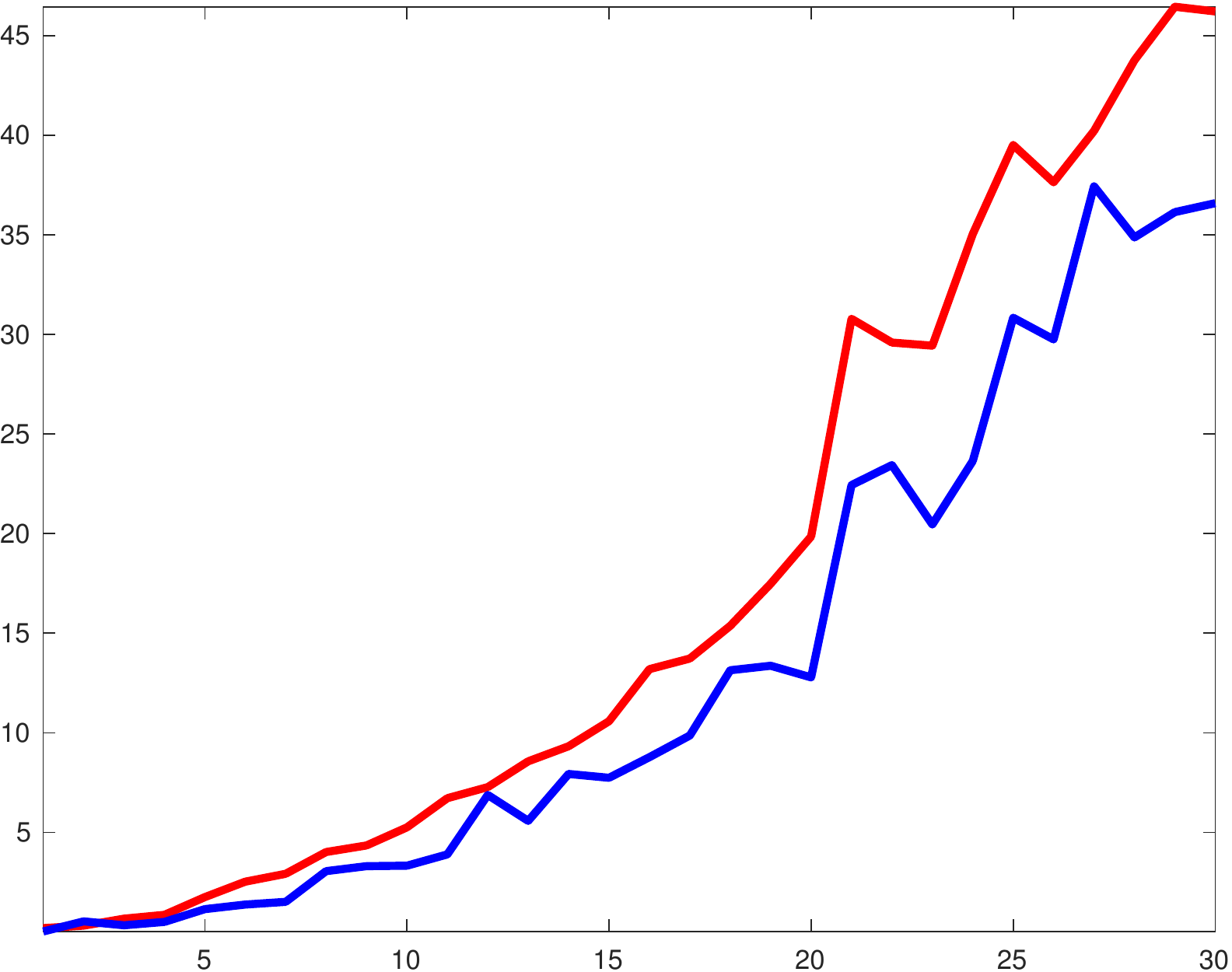}
\end{minipage}
}
\subfigure[][$\log(\textnormal{iterations})$ for $\ell_1$\\ minimization as function of $\log(\epsilon^{-1})$.]{\label{subfig:e}
\begin{minipage}[t]{0.236\textwidth}
	\centering
	\includegraphics[width=\textwidth]{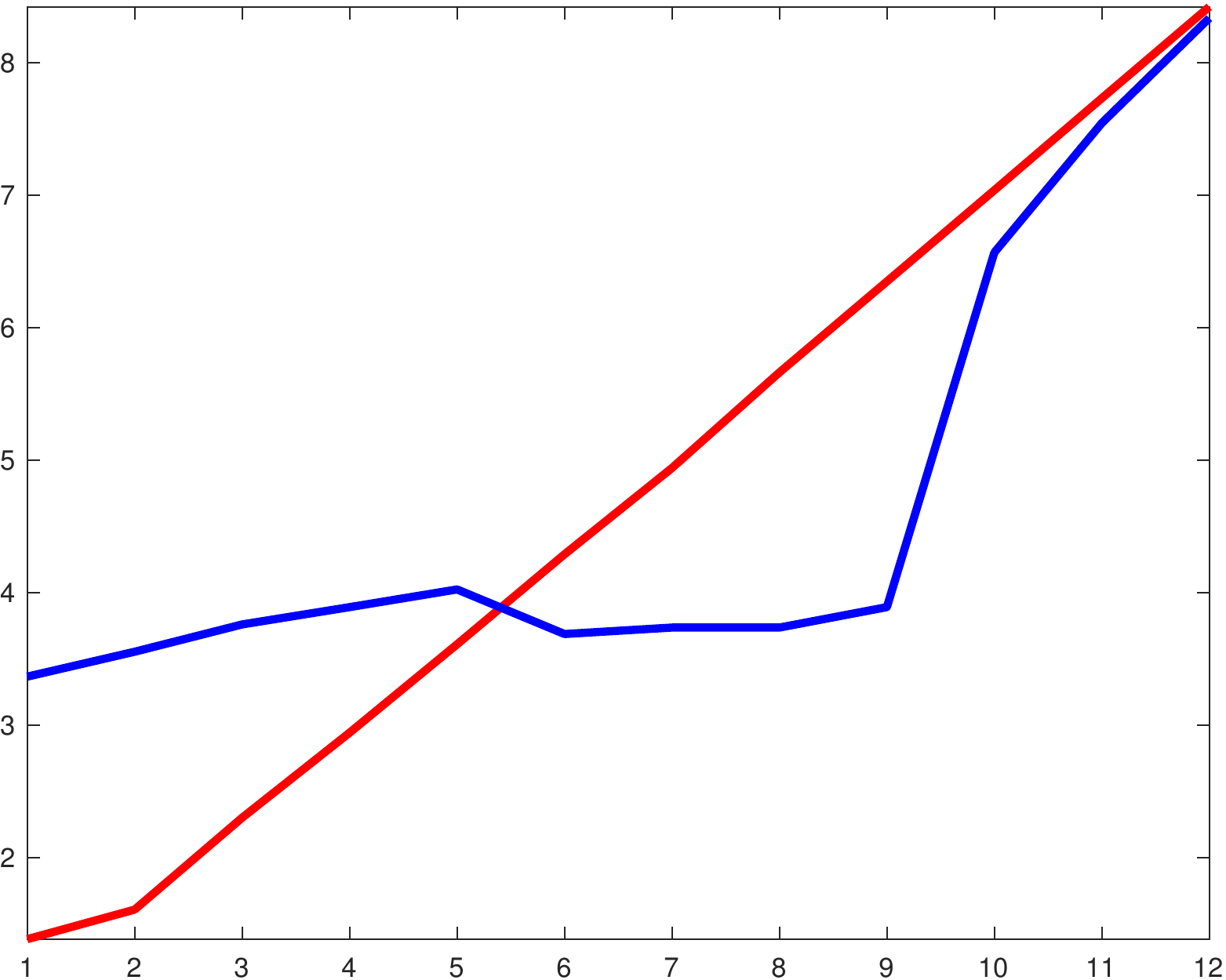}
\end{minipage}
}\rulesep
\subfigure[][Time (sec) for $\ell_1$\\ minimization as function of $\log(\epsilon^{-1})$.]{\label{subfig:f}
\begin{minipage}[t]{0.236\textwidth}
	\centering
	\includegraphics[width=\textwidth]{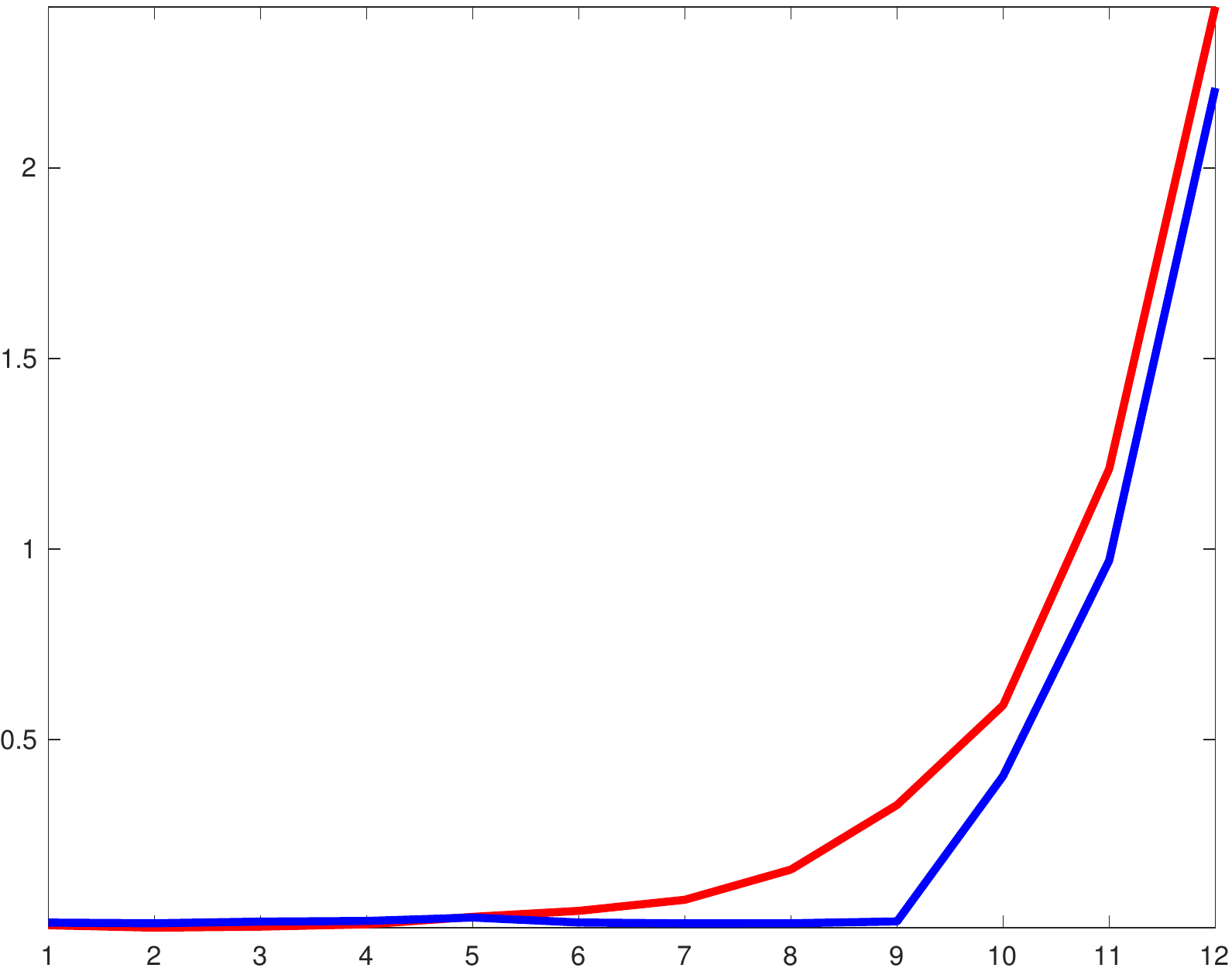}
\end{minipage}
}\rulesep
\subfigure[][Number of iterations for $\ell_1$\\ minimization as function of $m/200$.]{\label{subfig:g}
\begin{minipage}[t]{0.236\textwidth}
	\centering
	\includegraphics[width=\textwidth]{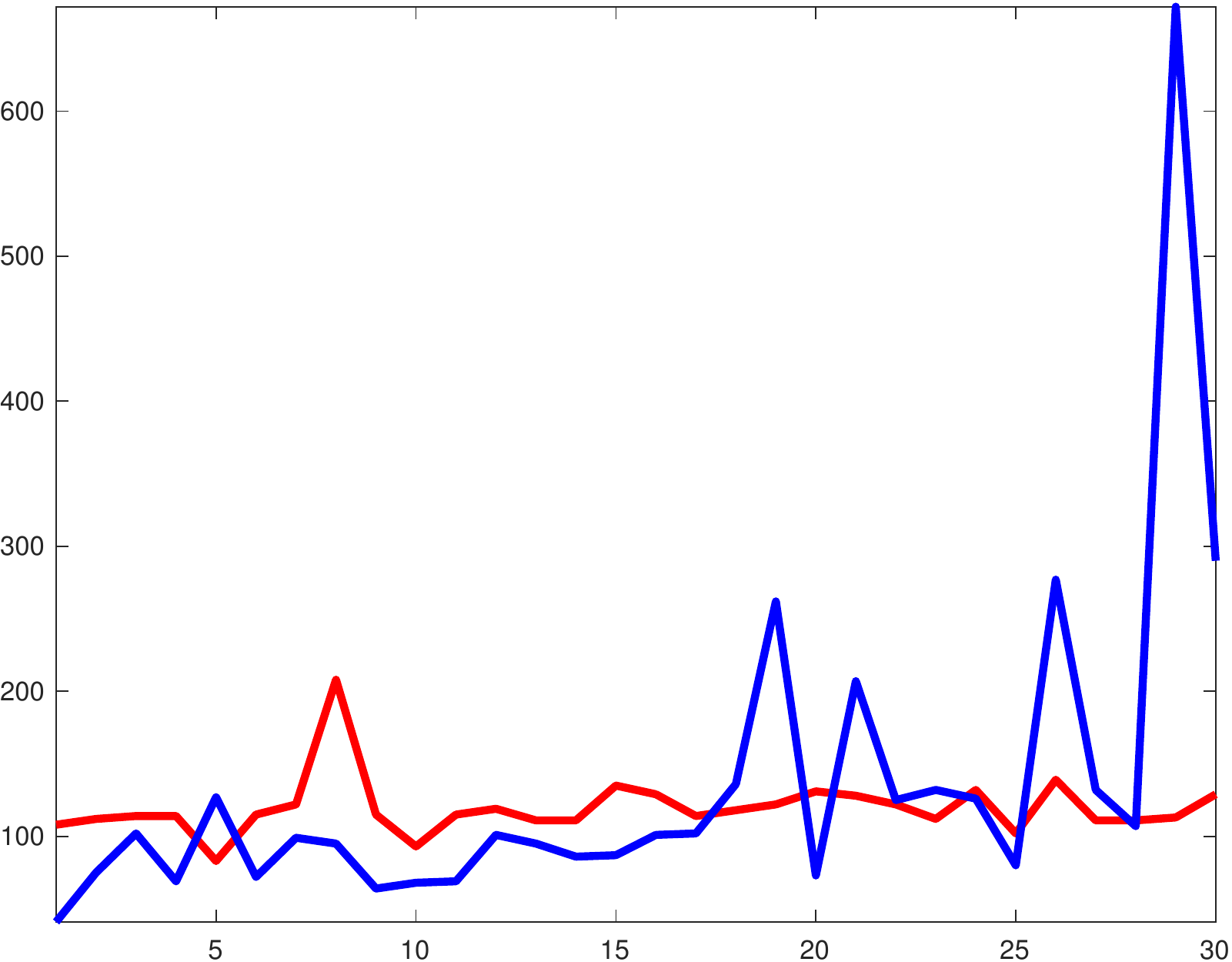}
\end{minipage}
}\rulesep
\subfigure[][Time (sec) for $\ell_1$\\ minimization as function of $m/200$.]{\label{subfig:h}
\begin{minipage}[t]{0.236\textwidth}
	\centering
	\includegraphics[width=\textwidth]{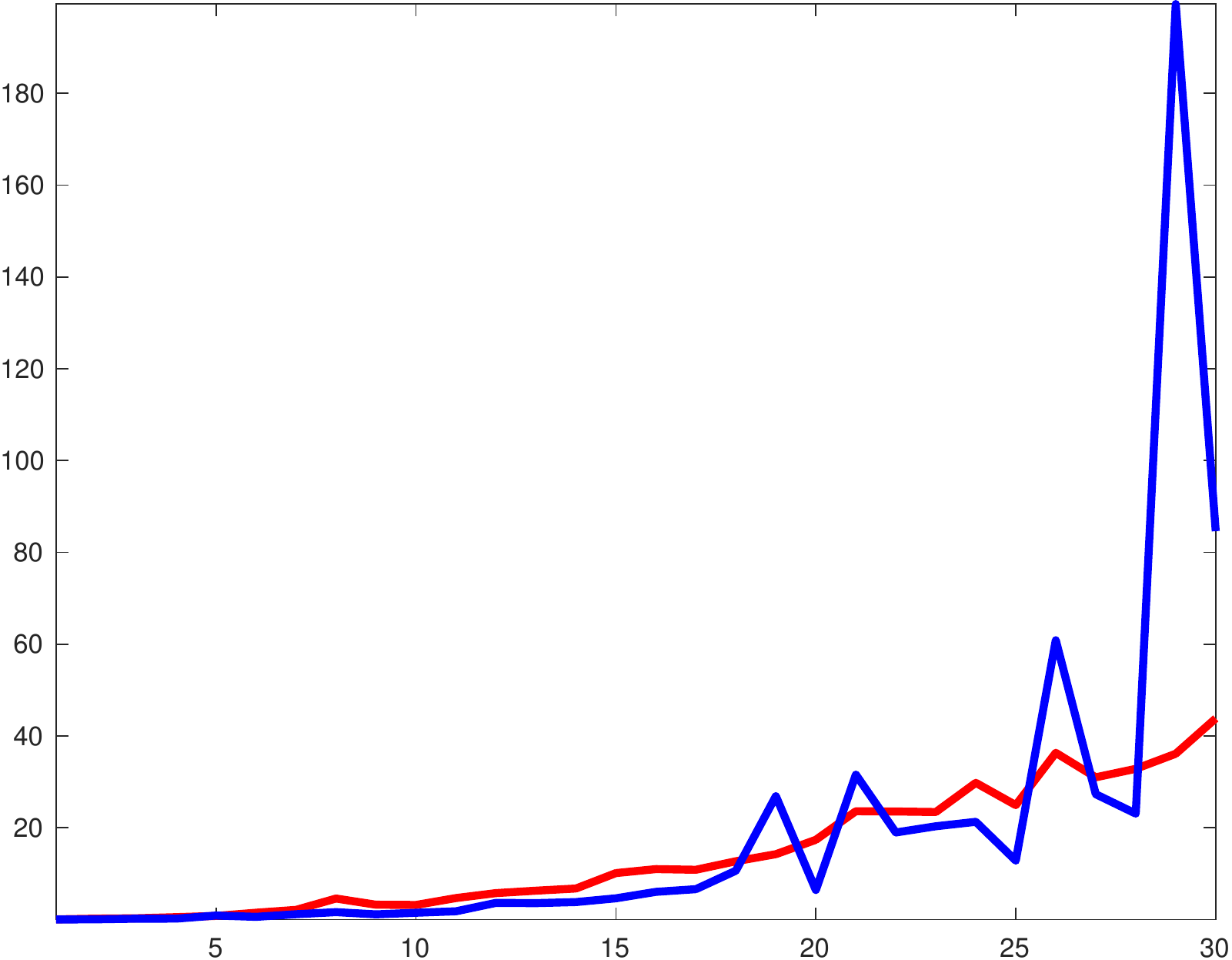}
\end{minipage}
}
\caption{Experimental results.}
\label{fig:experiments}
\end{figure*}

\section{Experiments}\label{sec:experiments}
We test both our resistance/conductance update schemes in order to verify that the resulting algorithms converge fast in practice.
We slightly modify the schemes such that they always update their target value $M$ depending on the value of the objective they have achieved so far. We stop when given the history of witnessed iterates, we can certify a sufficiently small duality gap. For solving linear systems, we used the conjugate gradient implementation from the $\ell_1$-MAGIC optimization suite~\cite{cgsolvem}.

We test both algorithms while varying $\epsilon$, and varying $m$. We consider both the update scheme given by our algorithms from Section~\ref{sec:descriptions}, and one where we attempt to double the length of the step for as long as the invariants from (\ref{eq:update_invariant_linfinity}) and (\ref{eq:update_invariant_lone}), respectively, are maintained. We notice that in general, using this long step strategy, we improve both the number of iterations and the running time.

The plots corresponding to the standard update scheme (short-steps) are drawn in \textcolor{red}{red}, those corresponding to the long-step version are drawn in \textcolor{blue}{blue}.

The experiments where we varied $\epsilon$ are reported in figures~\ref{subfig:a},~\ref{subfig:b},~\ref{subfig:e}, and ~\ref{subfig:f}.
For all these experiments, the input consists a random $150 \times 200$ matrix $\AA$ with orthogonal rows, and a vector $\vb$ obtained from applying $\AA$ to a 
 $\pm 1$-vector of sparsity $15$.
We plot the number of iterations/running time of the algorithm for $\epsilon = 1/2^k$, where $k \in \{1, \dots, 12\}$. 

We notice that for these experiments, the number of iterations for the short-step version does indeed scale linearly with $\epsilon^{-1}$; the long-step version makes significant gains in the $\ell_\infty$ case.

The experiments where we varied $m$ are reported in figures~\ref{subfig:c},~\ref{subfig:d},~\ref{subfig:g}, and~\ref{subfig:h}. For all  these experiments, the input consists of a random $150\times (200\cdot k)$ matrix $\AA$ with orthogonal vectors, and a  vector $\vb$ obtained from applying $\AA$ to a 
 $\pm 1$-vector of sparsity $15$, and a fixed accuracy $\eps = .01$.
We plot the number of iterations required by the algorithm for $k \in \{1, \dots, 30\}$.

We notice that for these experiments, both the number of iterations and the running time scale significantly better than by $m^{1/3}$, which suggests that this polynomial dependence in $m$ depends on the input structure, and can be avoided in practice.

\section*{Acknowledgements} 
 
AE was partially supported by NSF CAREER grant CCF-1750333 and NSF grant CCF-1718342. AV was partially supported by NSF grant CCF-1718342.

\clearpage
\appendix

\section{Deferred Proofs}

\subsection{Proof of Lemma~\ref{lem:energy_charact}}\label{sec:pf_l1}
\begin{proof}
We can write the formulation from (\ref{eq:energy_def}) as an unconstrained optimization problem using Lagrange multipliers:
\begin{align*}
\mathcal{E}_{\vr}(\vb) = \min_{\AA \vx = \vb} \langle \vr, \vx^2 \rangle
&= \min_{\vx}\max_{\vphi}  \langle \vr, \vx^2 \rangle 
+ 2 \langle \vphi, \vb - \AA \vx \rangle 
\\
&= \max_{\vphi} \min_{\vx} \langle \vr, \vx^2 \rangle 
+ 2\langle \vphi, \vb - \AA \vx \rangle\,.
\end{align*}
By making the gradient with respect to $\vx$ equal to $0$, we see that the inner minimization problem is optimized at $2 r_i \cdot x_i = 2 (\AA^\top \vphi)_i$ for all $i$, and equivalently $x_i = (\AA^\top \vphi)_i / r_i$. Plugging this back into the maximization objective w.r.t. $\vphi$ we obtain
\begin{align*}
\mathcal{E}_{\vr}(\vb) 
&=
 \max_{\phi} \left\langle \vr, \left(\mdiag(\vr)^{-1} \AA^\top \vphi \right)^2 \right\rangle + 2 \left \langle  \vphi, \vb - \AA \mdiag(\vr)^{-1} \AA^\top \vphi  \right\rangle 
\\
&=
 \max_{\vphi} 2 \langle \vphi, \vb \rangle - \langle \vphi, \AA\mdiag(\vr)^{-1} \AA^\top \vphi  \rangle
 \\
 &= \vb^{\top} \left( \AA \mdiag(\vr)^{-1} \AA^{\top} \right)^{+} \vb\,,
\end{align*}
where for the last equality we used that by optimality conditions one must have $(\AA\mdiag(\vr)^{-1}\AA^\top) \vphi = \vb$.

Finally, we prove (\ref{eq:energy_inverse_char}) by using the fact that for any symmetric matrix $\LL$ and vector $\vb$ one has that
\[
\frac{1}{\max_\phi 2 \vb^\top \vphi - \vphi^\top \LL \vphi} = \min_{\vphi: \vb^\top \vphi = 1} \vphi^\top \LL \vphi\,,
\]
which can be seen by observing that both expressions are optimized at $\vphi = \LL^+ \vb$, then applying (\ref{eq:energy_max_char}).

\end{proof}

\subsection{Lower Bound Lemma}
\begin{lemma}\label{lem:prod_lb} Let a set of nonnegative reals $\beta_1, \dots, \beta_k$ such that $1+\epsilon \leq \beta_i \leq \rho$ for all $i$, and $\sum_{i=1}^k \beta_i \geq S$. Then, for any $k$, one has that
\[
\prod_{i=1}^k \beta_i \geq \min\{  \rho^{S/{\rho}}, (1+\epsilon)^{S/{(1+\epsilon)}} \}\,.
\]
\end{lemma}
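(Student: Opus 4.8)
The plan is to reduce the product bound to an optimization over the multiset $\{\beta_i\}$ subject to the constraints $1+\epsilon \leq \beta_i \leq \rho$ and $\sum_i \beta_i \geq S$, and argue that $\prod_i \beta_i$ is minimized at an extreme configuration where every $\beta_i$ is pushed to one of the two endpoints $1+\epsilon$ or $\rho$. Concretely, I would take logarithms and study $f(\beta_1,\dots,\beta_k) = \sum_i \log \beta_i$ over the box $[1+\epsilon,\rho]^k$ intersected with the halfspace $\sum_i \beta_i \geq S$. Since $\log$ is concave, on the face where the sum constraint is active (i.e.\ $\sum_i \beta_i = S$, which is the relevant case — increasing any $\beta_i$ only increases the product) the minimum of a concave function over a polytope is attained at a vertex; such a vertex has all but at most one coordinate equal to an endpoint of $[1+\epsilon,\rho]$. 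So it suffices to analyze configurations with $a$ copies of $\rho$, $b$ copies of $1+\epsilon$, and at most one intermediate value, and in fact by a further perturbation argument I can absorb the stray coordinate.

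The key computation is then: among mixtures of the two endpoints with total weight exactly $S$, which minimizes the product? If we use mass $S_\rho$ worth of $\rho$-entries and $S_{1+\epsilon} = S - S_\rho$ worth of $(1+\epsilon)$-entries, there are $S_\rho/\rho$ copies of $\rho$ and $S_{1+\epsilon}/(1+\epsilon)$ copies of $1+\epsilon$ (ignoring integrality), so the product is $\rho^{S_\rho/\rho} (1+\epsilon)^{S_{1+\epsilon}/(1+\epsilon)}$, whose logarithm is $\frac{S_\rho}{\rho}\log\rho + \frac{S - S_\rho}{1+\epsilon}\log(1+\epsilon)$. This is \emph{linear} in $S_\rho$, hence minimized at one of the two extremes $S_\rho = S$ or $S_\rho = 0$, giving $\rho^{S/\rho}$ or $(1+\epsilon)^{S/(1+\epsilon)}$ respectively; the minimum of the two is exactly the claimed bound. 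The monotonicity observation — that $x \mapsto x^{1/x}$ type behavior makes the "all small" and "all large" cases the only candidates — is precisely what makes the worst case land on the two stated values.

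To make the vertex/endpoint reduction rigorous without invoking polytope geometry, I would prefer an elementary exchange argument: given any feasible configuration containing two coordinates $\beta_i \leq \beta_j$ both strictly inside $(1+\epsilon,\rho)$, replace them by $\beta_i - \delta$ and $\beta_j + \delta$ for small $\delta > 0$; this preserves the sum, keeps feasibility, and strictly decreases $\log\beta_i + \log\beta_j$ by concavity (strict unless they were equal, in which case move one toward an endpoint). Iterating drives all coordinates to $\{1+\epsilon,\rho\}$ except possibly one, and that leftover coordinate can be handled by noting that replacing it by $1+\epsilon$ and adjusting (it only decreases the sum, so we then need to check we still have $\geq S$ — alternatively, just observe the leftover only helps, i.e.\ bound it below by $1+\epsilon$ in the product and note it contributes at least its share to the sum). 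Then apply the linearity-in-$S_\rho$ computation above.

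The main obstacle I anticipate is integrality: the numbers $S_\rho/\rho$ and $S_{1+\epsilon}/(1+\epsilon)$ need not be integers, so one cannot literally realize the extremal mixture with a multiset of $k$ reals. I expect this is handled by the usual slack — the statement asks only for a lower bound, and rounding the counts down loses at most a multiplicative constant which is absorbed, or more cleanly by observing that the continuous relaxation is a valid lower bound on the discrete minimum since we are \emph{lower}-bounding the product (any genuine configuration has sum $\geq S$ and lies in the box, so its log-product is at least the infimum of $f$ over the continuous relaxation, which we just computed). So I would phrase the exchange argument directly on the given reals, never leaving the feasible set, and only pass to the clean closed-form at the very end as a lower bound. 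The rest is routine.
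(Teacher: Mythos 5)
Your proposal is correct and follows essentially the same route as the paper's proof: use concavity of $\sum_i \log \beta_i$ to push the minimizer to a vertex of the box-plus-sum-constraint polytope (all but one coordinate at an endpoint), relax integrality, and then observe that the log-product is linear in how the mass $S$ is split between the two endpoint values, so the minimum is attained at one of the two pure configurations $\rho^{S/\rho}$ or $(1+\epsilon)^{S/(1+\epsilon)}$. Your explicit exchange argument and the discussion of the leftover/integrality issue are just more careful renderings of steps the paper treats tersely.
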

\begin{proof}
Consider a fixed $k$, and let us attempt to minimize the product of $\beta_i$'s subject to the constraints. Equivalently we want to minimize $\sum_{i=1}^k \log(\beta_i)$, which is a concave function. Therefore its minimizer is attained on the boundary of the feasible domain.
This means that for some $0\leq k' \leq k-1$, there are $k'$ elements equal to $1+\epsilon$, $k-1-k'$ equal to $\rho$, and one which is exactly equal to the remaining budget, i.e. $S - k'(1+\epsilon) - (k-1-k') \rho$, which yields the product $(1+\epsilon)^{k'} \rho^{k-k'-1} (S-k'(1+\epsilon)-(k-1-k')\rho)$. This can be relaxed by allowing $k$ and $k'$ to be non-integral. Hence we aim to minimize the product $(1+\epsilon)^{k'} \rho^{k-k'}$, subject to $(1+\epsilon)k' + \rho (k-k') = S$.
Finally, we observe that we can always obtain a better solution by placing all the available mass on a single one of the factors, i.e. we lower bound either by $(1+\epsilon)^{S/(1+\epsilon)}$, or $\rho^{S/\rho}$, whichever is lowest.
\end{proof}

\bibliographystyle{amsalpha}
\bibliography{ref}

\end{document}